\documentclass[11pt, letterpaper]{article}

\usepackage{wrapfig}

\usepackage{times}
\usepackage{latexsym}
\usepackage{amsfonts,amsthm,amssymb}
\usepackage{amsmath}
\usepackage{euscript}
\usepackage{amstext}
\usepackage{graphicx}
\usepackage{color}
\usepackage{url}
\usepackage{verbatim}
\usepackage{xspace}
\usepackage{framed}
\usepackage{multirow}
\usepackage[numbers,sort&compress]{natbib} 
\usepackage[dvipsnames,usenames,table]{xcolor}
\usepackage[backref,colorlinks=true,urlcolor=blue,linkcolor=RoyalBlue,citecolor=OliveGreen]{hyperref}

\setlength{\textwidth}{6.5in} \setlength{\topmargin}{0.0in}
\setlength{\headheight}{0in} \setlength{\headsep}{0.0in}
\setlength{\textheight}{9in} \setlength{\oddsidemargin}{0in}
\setlength{\evensidemargin}{0in}

\newtheorem{lemma}{Lemma}[section]
\newtheorem{theorem}[lemma]{Theorem}

\newtheorem{definition}[lemma]{Definition}
\newtheorem{corollary}[lemma]{Corollary}

\newtheorem{claim}[lemma]{Claim}
\newtheorem{remark}[lemma]{Remark}

{\hspace*{\fill}$\Box$\par}






	\newcommand{\initOneLiners}{%
		\setlength{\itemsep}{0pt}
		\setlength{\parsep }{pt}
		\setlength{\topsep }{0pt}
	}

	\renewcommand{\vec}[1]{\mathbf{#1}}

\usepackage{fullpage}
\usepackage{amsthm,amsmath,amssymb}
\usepackage{algorithm, algorithmic}
\usepackage{mdframed}
\usepackage{graphicx}
\usepackage{enumitem}
\usepackage{xspace}
\usepackage{mdframed}
\usepackage{multirow}
\usepackage{hhline}
\usepackage{todonotes}

\makeatletter
\renewcommand{\paragraph}{%
	\@startsection{paragraph}{4}%
	{\z@}{1.25ex \@plus 1ex \@minus .2ex}{-1em}%
	{\normalfont\normalsize\bfseries}%
}
\makeatother

\newcommand{\set}[1]{\left\{#1\right\}}

\newcommand{\floor}[1]{\left\lfloor#1\right\rfloor}
\newcommand{\ceil}[1]{\left\lceil#1\right\rceil}

\renewcommand{\tilde}{\widetilde}

\newcommand{\sfd}{{\mathsf{d}}}

\newcommand{\os}{\textrm{\sc COSSP}\xspace}
\newcommand{\pc}{\textrm{\sc PCSP}\xspace}
\newcommand{\edf}{\textrm{\sc EDF}\xspace}

\newcommand{\prc}{\textrm{\sc PR2C}\xspace}
\newcommand{\hcp}{\textrm{\sc HCCP}\xspace}
\newcommand{\gmc}{\textrm{\sc GMCC}\xspace}
\newcommand{\dks}{\textrm{\sc DkS}\xspace}
\newcommand{\tc}{\tilde{c}}
\newcommand{\td}{\tilde{d}}
\newcommand{\pt}{\mathcal{P}}
\newcommand{\rs}{\mathcal{R}}
\newcommand{\lps}{\mathcal{S}}

\begin{document}

\title{ Flow-time Optimization For Concurrent Open-Shop and Precedence Constrained Scheduling Models}
\author{ Janardhan Kulkarni \thanks{Microsoft Research, Redmond,  {\tt jakul@microsoft.com} } \and
	Shi Li \thanks{University at Buffalo, Buffalo, NY,  {\tt shil@buffalo.edu}. The work is in part supported by NSF grants CCF-1566356 and CCF-1717134.}}

\date{}
\maketitle

\thispagestyle{empty}

\begin{abstract}
Scheduling a set of jobs over a collection of machines is a fundamental problem that needs to be solved millions of times a day in various computing platforms: in operating systems, in large data  clusters, and in data centers. 
Along with makespan, {\em flow-time}, which measures the length of time a job spends in a system before it completes,  is arguably the most important metric to measure the performance of a scheduling algorithm. In recent years, there has been a remarkable progress in understanding  flow-time based objective functions in diverse settings such as unrelated machines scheduling, broadcast scheduling,  multi-dimensional scheduling, to name a few.  

Yet, our understanding of the flow-time objective is limited mostly to the scenarios where jobs have simple structures; in particular, each job is  a single self contained entity. On the other hand, in almost all real world applications, think of MapReduce settings for example,  jobs have more complex structures. In this paper,  we consider two classical scheduling models that capture complex job structures: 1) {\em concurrent open-shop scheduling} (\os) and  2) {\em precedence constrained scheduling} (\pc). Our main motivation to study these problems specifically comes from their relevance to two scheduling problems that have gained importance in the context of data centers: {\em co-flow scheduling} and {\em DAG scheduling}.  
We design almost optimal approximation algorithms for \os and \pc, and show hardness results.

\end{abstract}


\section{Introduction}
	\label{sec:intro}

Scheduling a set of jobs over a collection of machines is a fundamental problem that needs to be solved millions of times a day in various computing platforms: in operating systems, in large data  clusters, and in data centers. 
Along with makespan, {\em flow-time}, which measures the length of time a job spends in a system before completing,  is arguably the most important metric to measure the performance of a scheduling algorithm. In recent years, there has been a remarkable progress in understanding  flow-time related objective functions in diverse settings such as unrelated machines scheduling \cite{GargK07, ImM11, AnandGK12, ImKMP14, BansalK15}, broadcast scheduling \cite{bansal2014better, ImM12, BansalKN10},  multi-dimensional scheduling \cite{ImKM15, ImKM14}, to name a few.  

Yet, our understanding of the flow-time based objective functions is mostly limited to the scenarios where jobs have simple structures; in particular, each job is  a single self contained entity.
On the other hand, in almost all real world applications, jobs have more complex structures. 
Consider the MapReduce model for example. 
Here, each job consists of a set of {\em Map} tasks and a set of {\em Reduce} tasks.
Reduce tasks cannot be processed unless Map tasks are completely processed\footnote{In some MapReduce applications, Reduce tasks can begin after the completion of a subset of Map tasks.}.
A MapReduce job is complete only when all Map and Reduce tasks are completed.
Motivated by these considerations, in this paper, we consider  two classical  scheduling models  that capture more complex job structures:  1) {\em concurrent open-shop scheduling (\os)}, and 2)  {\em precedence constrained scheduling (\pc)}. Our main reason to study these problems specifically comes from their relevance to two scheduling problems that have gained importance in the context of data centers: {\em co-flow scheduling and DAG scheduling}. 
We discuss more about how these problems relate to \os and \pc in Section \ref{apps}.

The objective function we consider in this paper is minimizing the {\em sum of general delay costs of jobs}, first introduced in an influential paper by Bansal and Pruhs \cite{BansalP10} in the context of single machine scheduling. In this objective, for each job $j$ we are given a {\em non-decreasing} function $g_j:\mathbb{Z_+} \rightarrow \mathbb{Z_+}$, which gives the cost of completing the job at time $t$. The goal is to minimize $\sum_{j} g_j(C_j)$, where $C_j$ is the completion time of $j$.
A desirable aspect of the general delay cost functions is that they capture several widely studied flow-time and completion time based objective functions. 
\begin{itemize}[topsep=3pt,itemsep=3pt, parsep=0pt]
	\item Minimizing the sum of weighted flow-times of jobs. This is captured by the function $g_j(t) = w_j \cdot (t-r_j)$, where $r_j$ is the release time of $j$.
	\item Minimizing the sum of weighted $p$th power of flow-times of jobs. This is captured by the function $g_j(t) = w_j \cdot (t-r_j)^p$.
	\item Minimizing the sum of weighted tardiness. This is captured by the function $g_j(t) = w_j \cdot \max \{0,(t-d_j)\}$, where $d_j$ is the deadline of $j$.
\end{itemize} 

In this paper, we design approximation algorithms for minimizing the sum of general delay costs of jobs for the concurrent open-shop scheduling and the precedence constrained scheduling problems.

\subsection{Concurrent Open-shop Scheduling Problem (\os)}
In \os, we are given a set of $m$ machines and a set of $n$ jobs.  Each job $j$ has a release time $r_j$. The main feature of \os is that a job consists of $m$ operations $O_{1j}, O_{2j}, \ldots O_{mj}$, one for each machine $i \in [m]$. 
Each operation $O_{ij}$ needs $p_{ij}$ units of processing on machine $i$. 
We allow operations to have {\em zero processing lengths}.
Throughout the paper, we assume without loss of generality that all our input parameters are positive integers.
A job is complete only when {\em all its operations are complete}. That  is, if $C_j$ denotes the completion time of $j$, then $p_{ij}$ units of operation $O_{ij}$ must be processed in the interval $[r_j, C_j]$ on machine $i$.  In the concurrent open-shop scheduling model, multiple operations of the same job {\em can be processed simultaneously} across different machines. 

The problem has a long history due to its applications in manufacturing, automobile and airplane maintenance and repair \cite{yang1998scheduling}, etc., and has been extensively studied both in operations research and approximation algorithm communities \cite{ahmadi2005coordinated,chen2007supply,garg2007order, mastrolilli2010minimizing, wang2007customer, wagneur1993openshops, leung2007scheduling, bansal2010inapproximability}. 
As minimizing makespan in \os model is trivial, 
much of the research has focused on the objective of minimizing the total weighted completion times of jobs.  The problem was first considered by Ahmadi and Bagchi \cite{ahmadi2005coordinated}, who showed the NP-hardness of the problem. Later, several groups of authors, Chen and Hall \cite{chen2007supply}, Garg et al.\ \cite{garg2007order}, Leung et al.\  \cite{leung2007scheduling}, and  Mastrolilli et al.\ \cite{mastrolilli2010minimizing}, designed 2-approximation algorithms for the problem. Under Unique Games Conjecture, Bansal and Khot showed that this approximation factor cannot be improved for the problem \cite{bansal2010inapproximability}.

Garg, Kumar, and Pandit \cite{garg2007order} studied the more difficult objective of minimizing the total flow-time of jobs, and showed that the problem cannot be approximated better than $\Omega(\log m)$, by giving a reduction from the set cover problem. However, they did not give any approximation algorithm to the problem, and left it as an open problem. To the best of our knowledge, the problem has remained open ever since. In this paper, we make progress on this problem. 

Let $P$ denote the ratio of the maximum processing length among all the operations to the minimum non-zero processing length of among all the operations; that is, $P := \frac{\max_{i,j}\{p_{ij} \}}{\min_{i,j} \{p_{ij}: p_{ij} \neq 0\}}$.


\begin{theorem}
\label{thm:OSUB}
For  the objective of minimizing the sum of general delay cost functions of jobs in the concurrent open-shop scheduling model, there exists a polynomial time $O(\log (m \log P))$ approximation algorithm.
\end{theorem}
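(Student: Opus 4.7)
The plan is to reduce the problem to an LP-based two-phase argument: a time-indexed LP captures the fractional schedule, and a carefully designed rounding scheme pays $O(\log m)$ for aggregating the $m$ per-operation completion times into a single job-completion time, plus an extra $O(\log \log P)$ factor for handling the generality of the delay functions $g_j$.

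First I would discretize the objective: since each $g_j$ is monotone non-decreasing, I can replace it by a step function with $O(\log(nmP))$ breakpoints (say at time values forming a geometric sequence up to the trivial upper bound on any reasonable schedule), losing a factor $1+\epsilon$. This restricts attention to $O(\log(nmP))$ ``time classes'' $\tau_0 < \tau_1 < \cdots$. Next I would solve a time-indexed LP relaxation with variables $x_{ijt}$ (fractional processing of operation $O_{ij}$ during class $t$) and $z_{jt} \in [0,1]$ (job $j$ is still pending after class $t$). The LP enforces machine capacities $\sum_j x_{ijt} \le \tau_{t+1} - \tau_t$ per machine and class, total processing $\sum_t x_{ijt} \ge p_{ij}$, release-time constraints, and, crucially, the ``max over machines'' coupling $z_{jt} \ge 1 - \tfrac{1}{p_{ij}}\sum_{s \le t} x_{ijs}$ for every $i$. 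The objective $\sum_j \sum_t z_{jt}\big(g_j(\tau_{t+1}) - g_j(\tau_t)\big)$ is then a valid lower bound on $\opt$.

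The heart of the argument is rounding the LP at cost $O(\log m)$. Here I would exploit the analogy with the set-cover hardness of Garg--Kumar--Pandit: the LP gives a fractional ``profile'' of operation completions across machines, but actually finishing job $j$ by some time $C$ requires every one of its $m$ operations to be finished by $C$, which is a covering requirement. Treating each time class as a separate deadline-scheduling instance, I would use randomized LP rounding (threshold rounding boosted by $O(\log m)$ together with a Chernoff + union-bound argument over the $m$ operations of each job) to produce, for each class, a preliminary assignment of operations to machines that meets the boosted deadlines with high probability. I would then stitch these per-class assignments together by running an EDF-like algorithm on each machine using the LP-derived deadlines.

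The main obstacle I expect is the amortization across classes: a job missed in class $t$ contributes to the LP cost for every later class in which it remains pending, so rounding errors can compound. The cleanest way to handle this is to prove that, with probability $1 - 1/m^{\Omega(1)}$, the rounded schedule completes each job by some class $t' \le t + O(1)$ whenever the LP has completed a sufficient constant fraction of all its operations by class $t$, and to charge the tail probability of any ``unlucky'' job to its own LP tail using the geometric spacing of the classes. Showing that the $O(\log m)$ boost really does suffice against the adversarial interaction of operations across machines, and that this charging survives the recoupling of classes on a common machine, is the step I expect to require the most technical care.
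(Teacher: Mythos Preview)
Your proposal has a genuine gap, and it stems from a misreading of where the difficulty lies.

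First, a structural point: in concurrent open-shop there is no ``assignment of operations to machines'' to round --- operation $O_{ij}$ is fixed to machine $i$. The only decision is a single deadline $C_j$ per job, after which EDF on each machine either succeeds or fails. So the rounding question is purely: how to choose integer deadlines $(C_j)_j$ so that EDF is simultaneously feasible on \emph{every} machine. Second, and more seriously, your Chernoff + union-bound-over-$m$-operations argument is aimed at the wrong constraint. The coupling $z_{jt} \ge 1 - \tfrac{1}{p_{ij}}\sum_{s\le t} x_{ijs}$ already forces the LP to pay for the slowest operation; there is no ``with high probability all $m$ operations finish'' event to analyze. The binding constraint is EDF feasibility, which is a \emph{capacitated covering} condition: for every machine $i$ and interval $I=[t_1,t_2]$, one needs $\sum_{j \in J(I): C_j > t_2} p_{ij} \ge \xi(i,I)$. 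Boosting by $O(\log m)$ and rounding independently does not control these constraints, because the $p_{ij}$ are arbitrary --- a single job can dominate an interval and Chernoff gives nothing. This is precisely why the natural LP has unbounded integrality gap without knapsack-cover inequalities, which your relaxation does not include.

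The paper's route is entirely different and goes through the Bansal--Pruhs geometric framework. One reformulates the problem as choosing deadlines subject to the EDF interval inequalities, views each (job, deadline-class) pair as a 2D rectangle with machine-dependent capacity $p_{ij}$, and obtains $m$ coupled capacitated-rectangle-cover instances sharing a common rectangle set (the \prc problem). The LP is strengthened with knapsack-cover inequalities. After picking heavily-fractional rectangles integrally, the remaining points are split into \emph{heavy} and \emph{light}. Heavy points reduce to an uncapacitated 4D set-cover instance whose objects (one 4-cuboid per machine, disjoint across machines) have union complexity $O(qm\log P)$; light points reduce to a 3D multi-cover instance with union complexity $O(qm)$. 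Quasi-uniform sampling then yields $O(\log(m\log P))$ and $O(\log m)$ respectively. The $\log m$ here is not a Chernoff artifact over operations; it is the logarithm of the union-complexity blowup from stacking $m$ parallel geometric instances into one.
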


We obtain the above result by generalizing the algorithm in Bansal and Pruhs \cite{BansalP10}.
Note that when $m = 1$, our result gives a $O(\log \log P)$ approximation algorithm to the problem, matching the best known polynomial time result in \cite{BansalP10}. 
Recently, for the special case of total weighted flow-time, a constant factor approximation algorithm was obtained by Batra, Garg, and Kumar \cite{batra} when $m = 1$. 
However,  running time of their algorithm is pseudo-polynomial. 
Since the approach in \cite{batra} is very different from the one in \cite{BansalP10}, our result does not generalize \cite{batra}.

As we discussed earlier, the general delay cost functions capture several widely studied performance metrics. Thus, we get:

\begin{corollary}
\label{thm:open-shoparbitrarycost}
There is a polynomial time $O(\log (m \log P))$ approximation algorithm in the concurrent open-shop scheduling model for the following objective functions:
1) Minimizing the sum of weighted flow-times of jobs;
2) Minimizing the weighted $\ell_k$-norms of flow-times of jobs; the approximation factor becomes  $O((\log (m \log P)^{\frac{1}{k}})$);
3) Minimizing the sum of weighted tardiness of jobs.

\end{corollary}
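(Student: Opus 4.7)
The plan is to observe that the corollary is an almost immediate consequence of Theorem~\ref{thm:OSUB}: each of the three listed objectives either is, or can be reduced to, a sum of general delay costs $\sum_j g_j(C_j)$ with non-decreasing $g_j$, so we can invoke Theorem~\ref{thm:OSUB} as a black box. Since the authors already argue that all parameters can be taken to be positive integers and schedules can be assumed to complete jobs at integer times, each $g_j$ we use will automatically be non-decreasing and integer-valued on $\mathbb{Z}_+$ (after a harmless integer scaling if needed).

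For item (1), take $g_j(t) = w_j\cdot \max\{0,\,t-r_j\}$. This function is non-decreasing, and $\sum_j g_j(C_j)$ is exactly the weighted flow-time, so Theorem~\ref{thm:OSUB} yields an $O(\log(m\log P))$ approximation directly. For item (3), take $g_j(t) = w_j\cdot \max\{0,\,t-d_j\}$, which is non-decreasing and equals the weighted tardiness, so again Theorem~\ref{thm:OSUB} applies verbatim.

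For item (2), the $\ell_k$-norm objective is $\bigl(\sum_j w_j (C_j - r_j)^k\bigr)^{1/k}$, which is \emph{not} a sum of delay costs. The reduction I would use is to set $g_j(t) = w_j \cdot (\max\{0,\,t-r_j\})^k$; this is non-decreasing, and $\sum_j g_j(C_j)$ is exactly the $k$-th power of the $\ell_k$-norm. Crucially, since $x \mapsto x^{1/k}$ is strictly monotone, minimizing $\sum_j g_j(C_j)$ and minimizing the $\ell_k$-norm admit the same optimal schedule. If $\textsc{ALG}$ denotes the schedule returned by the Theorem~\ref{thm:OSUB} algorithm and $\textsc{OPT}$ the optimum, then
\[
\sum_j w_j (C^{\textsc{ALG}}_j - r_j)^k \;\le\; \alpha \cdot \sum_j w_j (C^{\textsc{OPT}}_j - r_j)^k
\]
with $\alpha = O(\log(m\log P))$. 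Taking $k$-th roots on both sides transforms this into
\[
\Bigl(\sum_j w_j (C^{\textsc{ALG}}_j - r_j)^k\Bigr)^{1/k} \;\le\; \alpha^{1/k} \cdot \Bigl(\sum_j w_j (C^{\textsc{OPT}}_j - r_j)^k\Bigr)^{1/k},
\]
which gives the claimed $O\bigl((\log(m\log P))^{1/k}\bigr)$ approximation for the weighted $\ell_k$-norm of flow-times.

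There is essentially no obstacle beyond sanity checks: one must verify that the $g_j$ defined above satisfy the (mild) hypotheses of Theorem~\ref{thm:OSUB}, namely that they are non-decreasing and take nonnegative integer values on the relevant integer time grid. The only delicate point is the monotonicity argument that the $\ell_k$-norm optimum and the sum-of-$k$-th-powers optimum coincide, which is where the $1/k$ exponent in the approximation factor comes from via the $k$-th root.
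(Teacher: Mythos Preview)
Your proposal is correct and matches the paper's treatment: the paper does not give a separate proof of this corollary, since it follows immediately from Theorem~\ref{thm:OSUB} by instantiating $g_j$ as listed in the introduction, with the $\ell_k$-norm case handled exactly as you describe (optimize the $k$-th power, then take $k$-th roots to get the $\alpha^{1/k}$ factor).
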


We give the proof Theorem~\ref{thm:OSUB} in Section \ref{sec:openshop}.

\subsection{Precedence Constrained Scheduling Problem (\pc)}


More complex forms of job structures are captured by the precedence constrained scheduling problem (\pc), another problem that has a long history dating back to the seminal work of Graham \cite{graham1966bounds}. Here, we have a set of $m$ identical machines and a set of $n$ jobs; each job $j$ has a processing length $p_j > 0$, a release time $r_j > 0$.  Each job $j$ must be scheduled on exactly one of the $m$ machines. The important feature of the problem is that they are precedence constraints between jobs that capture the computational dependencies across jobs. The precedence constraints are  given by a partial order ``$\prec$'', where a constraint
$j \prec j'$ requires that job $j'$ can only start after job $j$ is completed. Our goal is to schedule (preemptively) each job on exactly one machine to minimize $\sum_{j}g_j(C_j)$.

Precedence constrained scheduling on identical machines to minimize the makespan objective is perhaps the most well-known problem in scheduling theory. Already in 1966, Graham showed that list scheduling gives a 2-approximation algorithm to the problem. Since then several attempts have been made to improve the approximation factor \cite{lam1977worst, gangal2008precedence}. However, Svensson \cite{svensson2010conditional} showed that problem does not admit a $2-\epsilon$ approximation under a strong version of the Unique Games Conjecture introduced by Bansal and Khot \cite{bansal2010inapproximability}. An unconditional hardness of $(4/3-\epsilon)$ is also  known due to Lenstra and Rinnooy Kan \cite{LR78}. Recently, Levey and Rothvoss \cite{levey20161} showed that it is possible to overcome these lowerbounds for the case when $m$ is fixed. An LP-hierarchy lift of the time-index LP with a slightly super poly-logarithmic number of rounds provides a $(1+\epsilon)$ approximation to the problem.

Another problem that is extensively studied in the precedence constrained scheduling model is the problem of minimizing the total weighted completion times of jobs. Note that this problem strictly generalizes the makespan problem, hence all the lowerbounds also extend to this problem. The current best approximation factor of 3.387 is achieved by a very recent result of Li \cite{Li17}. The work builds on a $4$-approximation algorithm due to Munier, Queyranne and Schulz (\cite{MQS98}, \cite{QS06}). 

In a recent work, Agrawal {\em et al} \cite{AgrawalLLM16} initiated the study of minimizing the total flow-time objective in DAG (Directed Acyclic Graphs) {\em parallelizability} model. In this model, each job is a DAG, and a job completes only when all the nodes in the DAG are completed. For this problem,  they showed greedy online algorithms that are constant competitive when given $(1+\epsilon)$-speed augmentation.  
The DAG parallelizability model is a special case of \pc.  However, as there are no dependencies between jobs, and individual nodes of the DAG do not contribute to the total flow-time unlike in \pc, complexity of the problem is significantly different from \pc. For example, when there is only one machine, the DAG structure of individual jobs does not change the cost of the optimal solution, and hence the problem reduces to the standard single machine scheduling problem. Therefore, scheduling jobs using Shortest Remaining Processing Time (SRPT), where processing length of a DAG is its total work across all its nodes, is an optimal algorithm. (Within a DAG, the nodes can be processed in any order respecting the precedence constraints.)

On the other hand, we show a somewhat surprising result for the \pc problem.
We show that the problem of minimizing the total flow-times of jobs does not admit any reasonable approximation factor even on a {\em single machine}. This is in sharp contrast to makespan and the sum of weighted completion times objective functions that admit $O(1)$-approximation algorithms even on multiple machines. Our hardness proof is based on a recent breakthrough work of Manurangsi \cite{Manurangsi17} on approximating the Densest-k-Subgraph (\dks) problem.

\begin{theorem}
	\label{thm:PCLB}
	In the precedence constrained scheduling model, for  the objective of minimizing the total flow-times of jobs on a single machine, no polynomial time algorithm can achieve an approximation factor better than $n^{\frac{1}{(\log \log n)^c}}$, for some universal constant $c >0$, assuming the exponential time hypothesis (ETH). 
\end{theorem}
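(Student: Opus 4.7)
My plan is to reduce from Densest-$k$-Subgraph, invoking Manurangsi's inapproximability result \cite{Manurangsi17} to obtain the claimed bound. Given a \dks instance $(G=(V,E),k)$, I would build a single-machine precedence-constrained scheduling instance in which each vertex $v \in V$ becomes a vertex job $V_v$, and each edge $e=(u,v) \in E$ becomes a family of $M$ edge jobs $\{E_{e,i}\}_{i=1}^{M}$ with precedence constraints $V_u \prec E_{e,i}$ and $V_v \prec E_{e,i}$, together with carefully chosen release times and processing lengths. The design goal is that the total flow-time of any feasible schedule decompose as an ordering-independent baseline plus a term that depends only on the order $\sigma$ in which the vertex jobs are scheduled, and that this order-dependent part be small precisely when the early prefixes of $\sigma$ are dense.

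The canonical combinatorial cost to target is $\sum_{e=(u,v)} \max(\sigma(u),\sigma(v))$. Via the identity $\sum_{e} \max(\sigma(u),\sigma(v)) = n|E| - \sum_{t=1}^{n-1} N_t(\sigma)$, where $N_t(\sigma)$ is the number of edges inside the first $t$ vertices of $\sigma$, this cost is minimized exactly by orderings whose prefixes are maximally dense. In the YES case, placing a dense $k$-subgraph $S^\star$ at positions $1,\ldots,k$ gives $N_k \ge \binom{k}{2}$ and $N_t \ge \binom{k}{2}$ for every $t \ge k$, while in the NO case every $k$-prefix satisfies $N_k \le g$; a standard averaging argument propagates this bound to $N_t \le g (t/k)^2$ for every $t$, so the two cases give $\sum_t N_t$ differing by a gap-like factor. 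With the right tuning of $M$ and release times, this translates into a multiplicative gap in flow-time, which (together with the polynomial size of the reduction) lifts the \dks gap of $n^{1/(\log\log n)^c}$ to the scheduling problem.

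The main obstacle is preserving the \dks gap through the reduction. On a single machine with uniform processing and no release times, the total flow-time is \emph{entirely independent} of the schedule, since $\sum_j C_j$ equals the fixed quantity $N(N+1)/2$. So one must introduce non-uniform processing lengths (for example $p_{V}=L$, $p_{E}=1$, making SPT violate precedence) or non-trivial release times to make flow-time sensitive to $\sigma$, and then verify two things: (i) the ordering-independent baseline does not swamp the $\sigma$-dependent contribution, which is handled by choosing $M$ to be an appropriate polynomial in $n$ and by setting the edge-job release times so that a well-aligned schedule pays nearly zero flow-time on the edge jobs; and (ii) no non-canonical strategy (batching edges, deliberate idling, interleaving with partially processed vertex jobs) can beat the greedy ``densest prefix first, new edges immediately'' schedule by more than a constant factor. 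I would establish (ii) via a standard exchange argument on pairs of adjacent scheduling positions, showing that any feasible schedule can be converted into the greedy form without increasing the flow-time. Combining the two ingredients then yields the claimed $n^{1/(\log\log n)^{c}}$-inapproximability.
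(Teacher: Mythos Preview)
Your high-level plan---reduce from Densest-$k$-Subgraph and invoke Manurangsi's ETH-based gap---matches the paper. But the specific construction you sketch has a genuine gap, and the paper's reduction is built quite differently.

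The problem is that your cost proxy $\sum_e \max(\sigma(u),\sigma(v)) = n|E| - \sum_t N_t(\sigma)$ does not yield a \emph{multiplicative} gap matching the \dks gap. In the YES case, placing the $k$-clique first makes $N_t \ge \binom{k}{2}$ for $t\ge k$, but the $|E|-\binom{k}{2}$ edges \emph{outside} the clique still each contribute $\max(\sigma(u),\sigma(v))>k$. In Manurangsi's instances $|E|$ can vastly exceed $\binom{k}{2}$, so the contribution from non-clique edges---of order $\Theta(n|E|)$---dominates both the YES and NO costs, and the ratio collapses to $O(1)$. Blowing up each edge to $M$ copies only rescales both cases equally, and there is no way to set per-edge release times that drive the YES flow-time on edge jobs to near zero without already knowing which edges lie in the hidden clique. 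So step (i) of your plan cannot be carried out as stated.

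The paper avoids this entirely by reversing the precedence direction: it sets $j_e \prec j_v$, makes vertex jobs zero-length with weight $1$, and edge jobs unit-length with weight $0$. The relevant quantity then becomes simply \emph{how many vertex jobs are stuck at time $m-L$}: a vertex $v$ is free only once all incident edges are processed, so the stuck set $S$ has $\geq L$ induced edges. In YES this is $k$; in NO, by a vertex-version corollary of Manurangsi, it is at least $k\cdot n^{1/(\log\log n)^c}/2$. To convert this count into a multiplicative flow-time gap, the paper releases a long stream of tiny weight-$1$ jobs of size $\delta = 1/n^2$ throughout $(m-L,T]$: processing any remaining unit-length edge job causes $n^2$ of these to pile up, so at every moment in $(m-L,T]$ at least (stuck count) weight-$1$ jobs are alive. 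Taking $T$ polynomially large makes both YES and NO costs $\Theta((\text{stuck count})\cdot T)$, and the ratio is exactly the \dks gap. This ``penalty clock'' gadget, together with the reversed precedence, is the idea missing from your proposal.
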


To circumvent this hardness result, we study the problem in the speed augmentation model, which can also be thought of as a bi-criteria analysis. In the speed augmentation model, each machine is given some small extra speed compared to the optimal solution.  The speed augmentation model was introduced in the seminal work of Kalyanasundaram and Pruhs \cite{kirk} to analyze the effectiveness of various scheduling heuristics in the context of online algorithms. However, the model has also been used in the offline setting to overcome strong lowerbounds on the approximability of various scheduling problems; see for example results on non-preemptive flow-time scheduling that use $O(1)$-speed augmentation to obtain $O(1)$-approximation ratio  \cite{bansal2007non, im2015dynamic}.

Our second main result is an $O(1)$-approximation algorithm for the problem in the speed augmentation model. Previously, no results were known for the flow-time related objective functions for \pc.

\begin{theorem}
\label{thm:PCUB}
For  the objective of minimizing the sum of general delay cost of jobs in the precedence constrained scheduling model on identical machines, there exists a polynomial time $O(1)$-speed $O(1)$ approximation algorithm. Furthermore, the speed augmentation required to achieve an approximation factor better than $n^{1-c}$, for any $c > 0$, has to be at least the best approximation factor of the makespan minimization problem. The lowerbound on speed augmentation extends to any machine environment, such as related and unrelated machine environments. 
\end{theorem}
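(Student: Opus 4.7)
The plan is to tackle Theorem~\ref{thm:PCUB} in two halves.

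For the $O(1)$-speed $O(1)$-approximation, I would start from a time-indexed LP relaxation whose variables are $y_{j,t}$ (fractional processing of $j$ in slot $t$) and $z_{j,t}$ (fraction of $j$ completed by time $t$). The constraints say that the total processing of $j$ equals $p_j$, no processing happens before $r_j$, at most $m$ units of work occur per slot, and the precedence relaxation $C^{\mathrm{lp}}_{j'} \ge C^{\mathrm{lp}}_j + p_{j'}$ holds whenever $j \prec j'$, where $C^{\mathrm{lp}}_j := \sum_t t\,(z_{j,t}-z_{j,t-1})$. The objective is $\sum_j \sum_t g_j(t)(z_{j,t}-z_{j,t-1})$. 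From the LP solution I would assign each job $j$ a pseudo-deadline $\tilde C_j$ equal to its LP $\alpha$-point for a small constant $\alpha$; monotonicity of $g_j$ combined with a Markov-style argument then gives $g_j(\tilde C_j) \le O(1)$ times the LP contribution of $j$.

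The algorithm itself is a list-scheduling rule on $(1+\eps)$-speed machines that, at every moment, assigns the available jobs with smallest $\tilde C_j$ to free machines, always respecting precedences. The core claim is that every job finishes by time $O(\tilde C_j)$, whence an $O(1)$-approximation for $\sum_j g_j(C_j)$ follows from monotonicity of $g_j$. This is a Graham-style charging: during any subinterval of $[r_j, C_j^{\mathrm{alg}}]$ in which $j$ is not being processed, either all $m$ machines are busy on jobs of pseudo-deadline at most $\tilde C_j$, or some uncompleted ancestor of $j$ in the precedence DAG is being processed. The first case has total length bounded by the LP volume of jobs with pseudo-deadline $\le \tilde C_j$ divided by $m(1+\eps)$, while the second is bounded by the LP length of a critical ancestor chain of $j$; both are $O(\tilde C_j)$, and the speed slack absorbs the fractional-to-integral rounding loss. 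The main obstacle is this joint volume-and-chain control: precedence constraints let local congestion propagate arbitrarily far down the DAG, so the charging has to be carried out simultaneously along the critical chain and across priority classes.

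For the lower bound on speed, the plan is an amplifying reduction from precedence-constrained makespan minimization. Given an identical-machine makespan instance $I$ with optimum $T$ and best poly-time approximation factor $\rho$, I would build a flow-time instance $I^\star$ consisting of $N = n^{\Omega(1)}$ copies of $I$, with release times and cross-copy precedences chosen so that per-copy delay tiles and composes into total flow-time in a controllable way. Then an $s$-speed $\alpha$-approximation for flow-time on $I^\star$ can be decoded, on the best-performing copy, into an $(s \cdot (1+o(1)))$-approximation for makespan on $I$; the assumed hardness of beating $\rho$ forces $s \ge \rho$ as soon as $\alpha$ drops below $n^{1-c}$ for a sufficiently large polynomial choice of $N$. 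Since the reduction uses the machine environment only as a black box through its best-known makespan lower bound, the same argument transfers verbatim to related and unrelated machine environments by substituting the corresponding makespan hardness.
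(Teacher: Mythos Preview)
There is a genuine gap in your upper-bound argument. You write that the list schedule finishes every job by time $O(\tilde C_j)$, ``whence an $O(1)$-approximation for $\sum_j g_j(C_j)$ follows from monotonicity of $g_j$.'' This step fails for general delay costs. Monotonicity tells you nothing about $g_j(c\,\tilde C_j)$ versus $g_j(\tilde C_j)$ for a constant $c>1$: the functions $g_j$ are arbitrary non-decreasing, so $g_j(2\tilde C_j)/g_j(\tilde C_j)$ can be unbounded (take $g_j(t)=2^t$, or a step function). Your Markov argument only controls $g_j(\tilde C_j)$ against the LP; to get an $O(1)$-approximation you need the algorithmic completion time to satisfy $C_j^{\mathrm{alg}} \le \tilde C_j$ \emph{exactly}, not merely up to a constant. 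That is precisely where the speed augmentation must be spent, and the paper's analysis is organised around this: it defines $C_j$ as the LP half-point, proves an interval-volume property (for every $[a,b]$, the relevant processing fits in $2(b-a)$), and then uses a potential-function argument (via a ``fresh point'' notion and a function $f(t)=\min\{t,\min_{j\in R_t}(r_j+p_j(t))\}$) to show that list scheduling at speed $3$ achieves $\tilde C_{j}\le C_j$ on the nose. A further factor of $2$ speed handles the half-point truncation, and an additional $O(1)$ converts the migratory schedule to a non-migratory one. Your Graham-style busy/chain dichotomy is the right intuition, but as stated it only yields $C_j^{\mathrm{alg}}=O(\tilde C_j)$; you would need to sharpen it to an exact inequality under a specific speed, and you have not said how.

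For the lower bound, your multi-copy amplification is a different route from the paper's and the key decoding step (``on the best-performing copy'') is not justified: a small total flow-time on $I^\star$ does not obviously force any single copy to have near-optimal makespan, since flow-time aggregates across jobs and an approximation factor of $n^{1-c}$ leaves enormous slack. The paper's construction is simpler and avoids this issue: it releases the makespan instance at time $0$, then a dense chain of tiny unit-weight jobs throughout $[1,T]$, each depending on all earlier jobs. If the algorithm's speed is below the makespan hardness threshold $\gamma$, it cannot clear the initial block before the chain starts, and the chain's sequential precedence forces a backlog of $\Theta(1/\delta)$ live jobs for $\Theta(T)$ time; choosing $\delta$ and $T$ as suitable polynomials in $n$ yields the $n^{1-c}$ gap directly, with no decoding needed.
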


\medskip
We give the proofs of above theorems in Section \ref{sec:mig}.

\subsection{Applications of \os and \pc in Data Center Scheduling}
\label{apps}
Besides being fundamental optimization problems, \os and \pc models are very closely related to the scheduling problems that arise in the context of data centers. In particular, \os is a special case of the {\em Coflow Scheduling} problem  introduced in a very influential work of Choudary and Stoica \cite{chowdhury2012coflow, chowdhury2014efficient}. On the other hand, \pc generalizes the DAG scheduling problem, again a widely studied problem in systems literature. In fact, the DAG scheduling model has been adopted by Yarn, the resource manager of Hadoop \cite{hadoop}. See  \cite{GrandlKRAK16,grandl2016altruistic} and references there-in for more details.  

Besides being fundamental optimization problems, \os and \pc models are very closely related to the scheduling problems that arise in the context of data centers.  We briefly describe the relevance of \os and \pc to scheduling in data centers. 

\paragraph{Coflow Scheduling} The \os problem is a special case of the {\em coflow scheduling abstraction}  introduced in a very influential work of Choudary and Stoica \cite{chowdhury2012coflow, chowdhury2014efficient}, in the context of scheduling data flows. They defined coflow as a collection of parallel flows with a common performance goal. Their main motivation to introduce coflow  was that in big-data systems, MapReduce systems for example, communication is structured and takes place across machines in successive computational stages. In most cases, communication stage of jobs cannot finish until all its flows have completed. For example, in MapReduce jobs, a reduce task cannot begin until {\em all} the map tasks finish. Although coflow abstraction was introduced to model scheduling flows in big data networks, it can also be applied to job scheduling in clusters; see \cite{grandl2016altruistic} for example.

Therefore, we describe a slightly more general version of the coflow abstraction.  Here, we are given a set of $m$ machines (or $m$ resources). Each job $j$ consists of a set of operations $\{O_{ij}\}$ for $i =1,2...$. Associated with each operation $O_{ij}$ is a demand vector $D_{ij} = (d_{ij}(1), d_{ij}(2), ...,d_{ij}(m))$, where each $d_{ij}(k)\in \{0,1\}$. The demand vector indicates the subset of machines or resources the operation requires. An operation can be executed only when all the machines in the demand vector are allocated to it. Moreover, for each operation we are also given a processing length $p_{ij}$. The goal is to schedule all operations such that at any time instant the capacity constraints on machines are not violated: that is, each machine is allocated to exactly one operation. A job completes only when all its operations have finished.   

The coflow problem studied by Choudary and Stoica \cite{chowdhury2012coflow, chowdhury2014efficient} corresponds to the case where  the demand vectors of all jobs have exactly two 1s. That is, every operation needs two machines to execute. The machines typically correspond to input and output ports in a communication link.  
On the other hand, if the demand vector $D_{ij}$ of each operation $O_{ij}$ consists of exactly one non-zero entry, then the coflow scheduling is equivalent to \os.

In the past few years, coflow scheduling has attracted a lot of research both in theory and systems communities. In practice, several heuristics are known to perform well \cite{chowdhury2012coflow, chowdhury2014efficient, chowdhury2015efficient, grandl2016altruistic} for the problem.
The theoretical study of coflow scheduling was initiated by Qiu, Stein, and Zhong \cite{qiu2015minimizing}. By exploiting its connections to \os, they designed a constant factor approximation algorithm to the objective of minimizing the total weighted completion times of jobs. Building on this work, better approximation algorithms were designed in \cite{khuller2016brief, ahmadi2017scheduling, shafiee2017improved}. Unfortunately, the techniques developed in these works do not seem to extend to the flow-time related objectives. 

\paragraph{DAG Scheduling}  Another problem that has attracted a lot of research in practice is the DAG scheduling problem. In this problem, we are given a set of machines (or clusters), and a set of jobs. Each job has a weight that captures the priority of a job. Each job $j$ is represented by a directed acyclic graph (DAG). Each node of a DAG represents a task -- a single unit of computation-- that needs to executed on a {\em single machine}. Each task has a release time and a processing length. An edge $j \rightarrow j'$ in the DAG indicates that the task $j'$ depends on the task $j$, and $j'$ can not begin its execution unless  $j$  finishes.
The goal is to schedule jobs/DAGs on the machines so as to minimize the total weighted flow-time of jobs. Interestingly, this is one of the models of job scheduling that has been adopted by Yarn, the resource manager of Hadoop \cite{hadoop}. (Hadoop is a popular implementation of MapReduce framework.) Because of this, DAG scheduling has been a very active area of research in practice; see \cite{GrandlKRAK16,grandl2016altruistic} and references there-in for more details. 

It is not hard to see that DAG scheduling problem is a special case of the precedence constrained scheduling problem. 
The union of the individual DAGs of jobs can be considered as one DAG, with appropriately defined release times and weights for each node. Furthermore, if jobs have no weight, and the release times of all tasks in the same DAG are equal, then the DAG scheduling model described above is same as the DAG parallelizability model studied by Agrawal {\em et al} \cite{AgrawalLLM16}. In fact, they design $(1+\epsilon)$-speed $O(1/\epsilon)$-competitive {\em online algorithm} for the problem.
On the other hand, our approximation algorithm for \pc gives an approximation algorithm for the DAG scheduling problem even with weights and arbitrary release times for individual tasks.
\subsection{Overview of the Algorithms and Techniques}
	\label{sec:ot}

Both our algorithms are based on rounding linear programming relaxations of the problems. However, individual techniques are quite different, and hence we discuss them separately.

\paragraph{Open-shop Scheduling Problem} Our algorithm for  \os is based  on the geometric view of scheduling developed by Bansal and Pruhs \cite{BansalP10} for the problem of minimizing the sum of general delay costs of jobs on a single machine, which is a special case of our problem when $m = 1$. 
The key observation that leads to this geometric view is that minimizing general delay costs is equivalent to coming-up with a set of {\em feasible deadlines} for all jobs. Moreover, testing the feasibility of deadlines further boils down to ensuring that for {\em every} interval of time, the total volume of jobs that have (arrival time, deadline) windows within that interval is not large compared to the length. By a string of nice arguments, the authors show that this deadline scheduling problem can be viewed as a capacitated geometric set cover problem called R2C, which stands for capacitated rectangle covering problem in $2$-dimensional space.\footnote{``C'' stands for the capacitated version in which rectangles have capacities and points have demands. Later, we shall use ``M'' for the multi-cover version, where rectangles are uncapacitated (or have capacity 1) and points have different demands.  We use ``U'' for the uncapacitated version, where rectangles are uncapacitated and all points have demand 1.} Further, they argue that an $\alpha$-approximation algorithm for R2C problem can be used to obtain an $\alpha$-approximation algorithm for the scheduling problem.

In R2C, we are given a set $\pt$ of points in 2 dimensions, where each $p \in \pt$ is specified by coordinates $(x_p, y_p)$. Associated with each point $p \in \pt$ is a demand $d_p > 0$.  We are also given a set $\rs$ of rectangles, where $r \in \rs$ has the form $(0, x_r) \times (y^1_r, y^2_r)$.   Each rectangle $r$ has a capacity $c(r) > 0$ and a cost $w(r) > 0$.  The goal is to choose a minimum-cost set of rectangles, such that for every point $p \in \pt$, the total capacity of selected rectangles covering $p$ is at least $d_p$.  

Our problem, which we call \prc, can be seen as a parallel version of R2C. 
In \prc, we have $m$ instances of R2C problem with a common set of rectangles. Namely, the $i$th instance is defined by $(\pt_i, \rs)$, where each $p \in \pt_i$ is associated with a demand $d_p$, each $r \in \rs$ is associated with a capacity $c(i, r)$ and cost $w(r)$. Notice that a rectangle $r \in \rs$ has the same cost across the $m$ instances, but has different capacities in different instances.   The goal is to find a minimum cost set $X \subseteq \rs$ of rectangles that is a valid solution for every R2C instance $(\pt_i, \rs)$.
Using the arguments similar to \cite{BansalP10}, we also show that if there is an $\alpha$-approximation to \prc problem, it gives an $\alpha$-approximation for the \os problem.

\begin{figure}[!ht]
	\centering
	\includegraphics[width=0.6\textwidth]{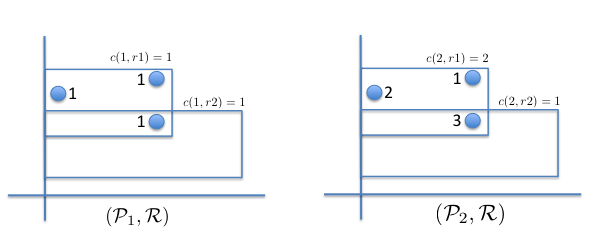} 	
	\caption{\label{fig:pr2c} The figure shows an instance of \prc problem where are there are two sets of points. For this instance the only feasible solution is to pick both the rectangles.}
\end{figure}

Thus, much of our work is about designing a good approximation algorithm for the \prc problem. 
Our algorithm for \prc is a natural generalization of the algorithm for R2C of Bansal and Pruhs in \cite{BansalP10}. 
In \cite{BansalP10}, Bansal and Pruhs formulated an LP relaxation for R2C based on knapsack cover (KC) inequalities.  In the LP, $x_r \in [0, 1]$ indicates the fraction of the rectangle $r$ that is selected.  If the LP solution picks a rectangle to some constant fraction, then we can also select the rectangle in our solution without increasing the cost by too much. After selecting these rectangles, some points in $\pt$ are covered, and some other points $p$ will still have residual demands $d'_p$.  These not-yet-covered points are divided into two categories: {\em light and heavy points}. Roughly speaking, a point $p$ is heavy if it is mostly covered by rectangles $r$ with $c(r) \geq d'_p$  in the LP solution; a point $p$ is light if it is mostly covered by rectangles $r$ with $c(r) < d'_p$. Heavy points and light points are handled separately. The problem of covering heavy points will be reduced to the R3U problem, a geometric weighted set-cover problem where elements are points in 3D space and sets are axis-parallel cuboids. On the other hand, the problem of covering light points can be reduced to $O(\log_2P)$ R2M instances. Each R2M instance is a geometric weighted set multi-cover  
instance in 2D-plane.  
By appealing to the geometry of the objects produced by the scheduling instance,  \cite{BansalP10} prove that {\em union complexity of objects} in R3U and R2M instances is small. In particular, R3U instance has \emph{union complexity} $O(\log P)$ and each R2M instance has union complexity $O(1)$.
Using the technique of quasi-uniform sampling introduced in \cite{varadarajan2009epsilon, chan2012weighted} for solving geometric weighted set cover instances with small union complexity, Bansal and Pruhs obtain $O(\log \log P)$ and $O(1)$ approximation ratios for the problems of covering heavy and light points, respectively. 

\medskip	
In our problem, we have $m$ parallel R2C instances with a common set of rectangles.  As in \cite{BansalP10}, for each instance,  we categorize the points into heavy and light points based on the LP solution.  The problem of handling heavy points then can be reduced to $m$ R3U instances, with the $m$ sets of cuboids identified.  However, we cannot solve these $m$ R3U instances separately, because such a solution cannot be mapped back to a valid schedule for \os. Therefore, we combine the $m$ instances of R3C into a single instance of a 4 dimensional problem.
So, in the combined instance, our geometric objects, which we call hyper-4cuboids, contain (at most) one 4-cuboid (a cuboid in 4 dimensions) from each of the $m$ instances.
The goal is to choose a minimum cost set of objects to cover all the points.   
On the other hand, for the light points, \cite{BansalP10} reduced the problem to $O(\log P)$ R2M instances. Again, this is approach is not viable for our case as we need to solve all the instances in parallel. By a simple trick, we first merge the $O(\log P)$ instances into one R2M instance.  We then have $m$ R2M instances with a new sets of rectangles identified, which we map into a single 3-dimensional geometric multi-set cover problem. 

In both cases we show that the union complexity of the objects in our geometric problems increase at most by a factor of $m$ compared to the objects in \cite{BansalP10}.  Thus, we have $O(m \log P)$ union complexity for the problem for heavy points and $O(m)$ union complexity for the problem for light points. Using the technique of \cite{BansalP10}, we obtain $O(\log (m \log P))$ and $O(\log m)$ approximation ratios for heavy and light points respectively, resulting in an $O(\log (m \log P))$ overall approximation ratio.

\medskip
\paragraph{Precedence Constrained Scheduling} Our algorithm for the precedence constrained scheduling problem works in two steps. In the first step, we construct a \emph{migratory} schedule, in which  a job may be processed on multiple machines.  For migratory schedules, we can assume that all jobs have unit size by replacing each job $j$ of size $p_j$ with a precedence chain of $p_j$ unit length jobs. Solving a natural LP relaxation for the problem gives us a completion time vector $(C_j)_j$.  Then we run the list-scheduling algorithm of Munier, Queyranne and Schulz \cite{MQS98}, and Queyranne and Schulz \cite{QS06}, that works for the problem with weighted completion time objective.  Specifically, for each job $j$ in non-decreasing order of $C_j$ values, we insert $j$ to the earliest available slot after $r_j$ without violating the precedence constraints.  

To analyze the completion time of job $j$, we focus on the schedule constructed by our algorithm after the insertion of $j$.  A simple lemma is that at any time slot after $r_j$, we are making progress towards scheduling $j$ in the schedule: either all machines are busy in the time slot, or we are processing a set of jobs whose removal will decrease the ``depth'' of $j$ in the precedence graph.  This lemma was used in \cite{MQS98, QS06} to give their $3$-approximation for the problem of minimizing weighted completion time (for unit-size jobs), and recently by Li \cite{Li17} to give an improved $2.415$-approximation for the same problem.  With $3$-speed augmentation, this leads to a schedule that completes every job $j$ by the time $C_j$. With additional speed augmentation, this leads to an $O(1)$-approximation for the problem with general delay cost functions. In the second step,  we convert the migratory schedule into a non-migratory one, using some known techniques (e.g. \cite{kalyanasundaram2001eliminating}, \cite{Kulkarni17}, \cite{ImM16}).  The conversion does not increase the completion times of jobs,  but requires some extra $O(1)$-speed augmentation.

\section{Concurrent Open-shop Scheduling}
	\label{sec:openshop}

In this section we consider the concurrent open-shop scheduling problem. Recall that in \os, we are given a set of $m$ machines and a set of $n$ jobs.  Each job has a release time $r_j$. A job consists of $m$ operations $\{O_{ij}\}_i$, one for each machine $i \in [m]$. Each operation $O_{ij}$ needs $p_{ij}$ units of processing on machine $i$. A job finishes only when all its operations are completed. The goal is to construct a preemptive schedule that minimizes the sum of costs incurred by jobs: $\sum_j g_j(C_j)$. 

As we mentioned earlier, our algorithm for \os is based on the geometric view of scheduling developed in the work of Bansal and Pruhs \cite{BansalP10}. Similar to \cite{BansalP10}, we first reduce our problem to a geometric covering problem that we call Parallel Capacitated Rectangle Covering Problem (\prc). We argue that an $\alpha$-approximation to \prc will give an $\alpha$ approximation to our problem. Then, we design a $O(\log (m\log P))$-factor approximation algorithm to \prc.

\subsection{Reduction to \prc problem.}

In \prc, we have $m$ instances of R2C problem with a common set of rectangles. Input to the problem consists of sets $\pt_1, \pt_2, \ldots \pt_m $, where each $\pt_i$ is a set of points in 2-dimensional space. Each point $p \in \pt_i$ is specified by its coordinates $(x_p, y_p)$ and has a demand $d_p > 0$.  The input also consists of a set $\rs$ of rectangles. Each rectangle $r \in \rs$ has the form $(0, x_r) \times (y^1_r, y^2_r)$, and has a cost $w(r)$. Each rectangle $r$ also has a capacity $c(i,r)$ which depends on the point set $\pt_i$. Notice that a rectangle $r \in \rs$ has the same cost across the $m$ instances, but has different capacities in different instances.   The goal is to find a minimum cost set $X \subseteq \rs$ of rectangles that is a valid solution for every R2C instance $(\pt_i, \rs)$. Recall that a set of rectangles is a valid solution to an instance of R2C, if for every point the total capacity of rectangles that cover the point is at least the demand of the point.

We can capture the \prc problem using an integer program. Let $x_r$ be a binary variable that indicates if the rectangle $r \in \rs$ is picked in the solution or not. Then, the following integer program IP (\ref{IP:prc} - \ref{e:prcnonnegativity}) captures the \prc problem.
\vspace{-3mm}
\begin{align}
\textstyle  \text{Minimize} \quad \sum_{ r \in \rs} w(r)x_r   \label{IP:prc}  \vspace{-3mm} 
\end{align}
\vspace{-10mm}
\begin{align}
\textstyle  \forall i, \forall p \in \pt_i &:  &\hspace{2mm} \sum_{r \ni p}  c(i,r) x_r  &\geq d_p \label{e:prccover} \\
\textstyle \forall r \in \rs&: &\hspace{2mm} x_r \in \{0, 1\} \label{e:prcnonnegativity} 
\end{align}

To see the connection between \os and \prc, we need to understand the structure of feasible solutions to \os. Consider a feasible schedule $S$ to an instance of \os. Suppose the completion time of a job $j$ is $C_j$ in $S$.  This implies that on every machine $i$,  the job $j$ completes its operation $O_{ij}$ in the interval $[r_j, C_j]$. If $S$ is a feasible schedule and $\{C_j\}_j$ is the completion times of jobs, then processing jobs using the Earliest Deadline First (\edf) algorithm on each machine ensures that all jobs finish by their deadline $C_j$. Thus, one of the main observations behind our reduction is that minimizing the sum of costs incurred by jobs is equivalent to coming up with a deadline $C_j$ for each job $j$. Thus a natural approach is to formulate  \os as a deadline scheduling problem. However, this raises the question: Is there a way to test if a set of deadlines $\{C_j\}_j$ is a feasible solution to \os? An answer to the question is given by the characterization of when \edf will find a feasible schedule (one where all jobs meet their deadlines) on a single machine. To proceed, we need to set up some notation.

For an interval $I = [t_1, t_2]$, which consists of all the time slots between $t_1$ and $t_2$, let $J(I)$ denote the set of jobs that have release times in $I$: $J(I) = \{j | r_j \in [t_1, t_2] \}$. For a set of jobs $J'$ and a machine $i$, let $P(i, J')$ denote the total processing length of operations $O_{ij}$ of the jobs in the set $J'$. Now, we introduce the notion of {\em excess} for an interval. 

\begin{definition}
For a given interval $I$ and a machine $i$, the excess of $I$ on machine $i$, denoted by $\xi(i, I)$, is defined as $\max \{0, P(i, J(I)) - |I| \}$.
\end{definition}

Following lemma states the condition under which \edf can schedule all the jobs within their deadlines.

\begin{lemma}
\label{lem:edfch}
Given a set of jobs with release times and deadlines, scheduling operations on each machine according to \edf is feasible if and only if, for every machine $i$ and every interval $I = [t_1, t_2]$, the total processing lengths of operations corresponding to jobs in $J(I)$ that have deadlines greater than $t_2$ is at least $\xi(i, I)$.
\end{lemma}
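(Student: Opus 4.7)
The condition in the lemma can be rephrased more cleanly. For a fixed machine $i$ and interval $I=[t_1,t_2]$, let $B(i,I)=\{j\in J(I):C_j\le t_2\}$ denote the jobs released in $I$ whose deadlines also lie in $I$, and let $A(i,I)=J(I)\setminus B(i,I)$. Since $P(i,J(I))=P(i,A(i,I))+P(i,B(i,I))$, the inequality $P(i,A(i,I))\ge \xi(i,I)=\max\{0,P(i,J(I))-|I|\}$ is equivalent to
\[
P(i,B(i,I))\;\le\;|I|\qquad\text{for every machine }i\text{ and every interval }I=[t_1,t_2].
\]
This is the classical feasibility condition for a single-machine deadline-scheduling problem, applied independently to each of the $m$ machines. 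The plan is therefore to prove the lemma in this reformulated form, first showing necessity and then showing that the condition is also sufficient for \edf on each machine.

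For the necessity direction (\emph{only if}), I would fix a machine $i$ and an interval $I=[t_1,t_2]$ and observe that if \edf produces a feasible schedule, then every job $j\in B(i,I)$ has $[r_j,C_j]\subseteq[t_1,t_2]$, so its operation $O_{ij}$ is entirely processed within $[t_1,t_2]$. Summing the processing volume scheduled on machine $i$ during $[t_1,t_2]$ gives at least $P(i,B(i,I))$, and since machine $i$ has only $|I|$ units of capacity in this window, we obtain $P(i,B(i,I))\le|I|$. This direction is essentially a pigeonhole argument and does not even use the specific \edf rule, only the existence of some feasible schedule.

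For the sufficiency direction (\emph{if}), I would run the standard exchange/charging argument used to prove optimality of \edf for single-machine preemptive deadline scheduling, applied separately on each machine (since, once deadlines are fixed, the $m$ machines are decoupled). Suppose for contradiction that on some machine $i$, \edf misses a deadline; let $j^\star$ be the first such job, with deadline $C_{j^\star}=t_2$. Define $t_1$ to be the latest time $\le t_2$ at which either machine $i$ is idle or is processing some job with deadline strictly greater than $t_2$. By the \edf rule together with the maximality of $t_1$, during $(t_1,t_2]$ machine $i$ is fully busy processing jobs whose deadlines are at most $t_2$; moreover each of these jobs must have release time in $(t_1,t_2]$ (otherwise it would have been available at time $t_1$ with deadline $\le t_2$, contradicting the definition of $t_1$). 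Hence all of these jobs, plus $j^\star$, belong to $B(i,I)$ for $I=(t_1,t_2]$. Since $j^\star$ is not completed by $t_2$, the operation $O_{ij^\star}$ contains uncompleted work, so $P(i,B(i,I))$ strictly exceeds the $|I|=t_2-t_1$ units processed in the window, contradicting the hypothesis.

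The only real subtlety will be in carrying out the window-extraction in the sufficiency proof cleanly (particularly the case analysis at $t_1$ and the handling of preemption so that the jobs processed in $(t_1,t_2]$ really do have release times in $(t_1,t_2]$ and deadlines $\le t_2$); everything else is essentially bookkeeping. Once this is settled, the lemma follows immediately on each machine independently and hence for the full \edf schedule across all $m$ machines.
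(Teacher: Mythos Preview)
Your reformulation and both directions are correct. The necessity argument is exactly what the paper sketches. For sufficiency, the paper does not actually give a proof: it simply says ``Sufficiency of the above lemma follows from a bipartite graph matching argument'' and defers to \cite{BansalP10}. So your write-up is more complete than the paper's.

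The one substantive difference is the \emph{method} for sufficiency. The paper points to a Hall-type bipartite matching argument (time slots versus units of work; the condition $P(i,B(i,I))\le |I|$ is precisely Hall's condition, so a feasible fractional matching exists, and \edf realizes one). You instead run the classical ``last idle/late-deadline point'' exchange argument that directly shows \edf itself succeeds. Your route is a bit more elementary and proves in one shot that \edf specifically works, not just that some feasible schedule exists; the matching route has the advantage of making the equivalence with Hall's condition transparent, which is what the paper needs downstream when it rewrites feasibility as covering constraints. Either approach is standard and suffices here.
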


Clearly, if the total processing lengths of operations corresponding to jobs in $J(I)$ that have deadlines greater than $t_2$ is less  than $\xi(i, I)$, then no algorithm can meet all the deadlines, as the length of operations scheduled in the interval $I$ is greater than $|I|$. Sufficiency of the above lemma follows from a bipartite graph matching argument, and we refer the reader to \cite{BansalP10} for more details.

Lemma \ref{lem:edfch} leads to an integer programming formulation for \os, and we shall use this IP to define the \prc instance. Let $L = \max_{i} \{ \sum_{j} p_{ij} \}$; note that every reasonable schedule finishes by $L$.  For every job $j$, and every integer $q \in \left[-1, \ceil{\log_2 g_j(L)}\right]$, let $t_{j, q}$ be the largest integer $t  \in (r_j, L]$ such that $g_j(t) \leq 2^q$ (if such $t$ does not exist, then $t_{j, q} = r_j$). Let $t_{j, -2} = r_j$. For every $j$ and $q\in \left[-1, \ceil{\log_2 g_j(L)}\right]$, we have a variable $x_{j,q}$ indicating whether $C_j \in \left(t_{j, q-1}, t_{j, q}\right]$. Therefore, for a job $j$, the total number of $x_{j,q}$ variables in our IP will be at most $O(\log g_j(L))$.   For every $j$ and every $t > r_j$, let $q_{j, t}$ be the integer $q$ such that $t \in (t_{j, q-1}, t_{j, q}]$.

Our IP for \os is as follows:

\vspace{-4mm}
\begin{align}
\textstyle  \text{Minimize} \quad \sum_{j} \sum_{q} \floor{2^q} x_{j,q}   \label{IP:OS}  \vspace{-3mm} 
\end{align}
\vspace{-7mm}
\begin{align}
\textstyle \forall i, \forall I = [t_1, t_2]&: &\hspace{2mm} \sum_{j \in J(I)} p_{ij} x_{j,q_{j, t_2}} &\geq \xi(i, I) \label{e:OSEdf} \\
\textstyle \forall j, \forall q &: &\hspace{2mm} x_{j,q} &\in \{ 0,1 \} \label{e:OSnn}
\end{align}

We shall argue that the value of above IP is within a factor of $O(1)$ times the optimum cost of the scheduling problem.  First, given a feasible schedule $S$ for \os with completion time vector $(C_j)_j$, we construct a solution to the above integer program with a small cost.  For each $j$ and integer $q$ such that $2^{q-1} \leq g_j(C_j)$, we let $x_{j, q} = 1$; set all other $x$ variables to $0$. Clearly, the cost of the solution to the IP is at most $O(1)$ times the cost of $S$.  Moreover, Constraint~\eqref{e:OSEdf} is satisfied: we have that $\sum_{j \in J(I):C_j > t_2}p_{i, j} \geq \xi(i, I)$. If $C_j > t_2$ for some $j \in J(I)$, then $x_{j, q_{j, t_2}} = 1$, as $2^{q_{j,t_2}-1} \leq g_j(t_2) \leq g_j(C_j)$.

On the other hand, if we are given an optimum solution to the above IP, we can convert it into a schedule for \os of cost at most the cost of the IP. For every $j$, let the completion time $C_j$ of the job to  be $t_{j, q}$, where $q$ is the largest number such that $x_{j, q} = 1$; such a $q$ must exist in order to satisfy the Constraint 
(\ref{e:OSEdf}) for the interval $t_1 = t_2 = r_j$.  Then the cost of our schedule is at most the cost of IP.  On the other hand, Constraint~\eqref{e:OSEdf} says that $\sum_{j \in J(I)} p_{ij} x_{j,q_{j, t_2}} \geq \xi(i, I)$, implying $\sum_{j \in J(I), C_j > t_2} p_{ij} \geq \xi(i, I)$, as $x_{j, q_{j,t_2}} = 1$ implies $C_j \geq t_2$.

\begin{remark}
We will not discuss the representation of arbitrary delay functions $g_j(t)$ and how to compute $t_{j, q}$ here. We refer the readers to \cite{BansalP10} for more details. Running time of all our algorithms will be polynomial in $n, m, \log L$ and $\max_{j}\{\log g_j(L)\})$.
\end{remark}

The above integer program hints towards how we can interpret \os geometrically as  \prc. Indeed, it is equivalent to \prc problem.  For every machine $i$ in \os, we create a set $\pt_i$ in \prc. For every interval $I = [t_1, t_2]$ with $\xi(i, I) > 0$, we associate a point $p_I = (t_1, t_2)$ in 2-dimensional space.  The demand $d_{p_I}$ of a point $p_I \in \pt_i$ is equal to $\xi(i, I)$, the excess of interval $I$ on machine $i$. This completes the description of the point sets $\pt_1, \pt_2, \ldots \pt_m$ in \prc. Now we define the rectangle set $\rs$. We shall create a rectangle for each job $j$ and each $q$ for which $x_{j, q}$ is defined. Notice that $x_{j, q}$ appears on the left side of Constraint~\eqref{e:OSEdf} if $r_j \in [t_1, t_2]$ and $t_2 \in (t_{j, q-1}, t_{j, q}]$. Thus, we shall let the rectangle for the variable $x_{j, q}$ to be $(0, r_j] \times (t_{j, q-1}, t_{j, q}]$. This rectangle has cost $2^q$; for the instance $\pt_i$, it has a capacity $p_{ij}$. Notice that the rectangle for $x_{j, q}$ covers a point $(t_1, t_2)$ if and only if $t_1 \leq r_j \leq t_{j, q-1} < t_2 \leq t_{j, q}$; in other words, the job $j$ is released in the interval $[t_1, t_2]$, and its completion time is greater than $t_2$, 
which is exactly what we want. Thus, the IP (\ref{IP:OS} - \ref{e:OSnn}) is equivalent to our \prc problem. We now forget about the \os problem and focus exclusively on designing a good algorithm for the \prc problem. 

\subsection{Algorithm For \prc Problem}
Our algorithm for \prc is based on rounding a linear programming relaxation of the IP (\ref{IP:prc} - \ref{e:prcnonnegativity}). As pointed out in \cite{BansalP10}, the linear programing relaxation of the IP (\ref{IP:prc} - \ref{e:prcnonnegativity}) obtained by relaxing the variables $x_r \in [0,1]$ has a large integrality gap even when  there is only one set of points. We strengthen our LP by adding the so called KC inequalities, introduced first in the work of Carr {\em et al} (\cite{carr2000strengthening}). Towards that we need to define $c(i, S)$, which indicates the total capacity of rectangles in $S \subseteq \mathcal{R}$ with respect to a point set $\mathcal{P}_i$: $c(i,S) = \sum_{r \in S} c(i,r)$. We are ready to write the LP.
\begin{equation}
 \text{Minimize} \quad \sum_{r \in \mathcal{R}} w(r)x_r   \label{lp:prc} 
\end{equation}\vspace*{-20pt}
\begin{align}
\forall i, \forall p \in P_i, \forall S \subseteq \mathcal{R} &: & \sum_{r \in \mathcal{R} \setminus S: p \in r} \left( \min \{ c(i,r), \max \{0, d_p - c(i, S)\} \} \right) \cdot  x_r   
&\geq d_p - c(i, S) \label{e:lpprccover} \\ 
\textstyle \forall r \in \mathcal{R}&: &\hspace{2mm} x_r &\geq 0\label{e:lpprcnonnegativity} 
\end{align}

Let us focus on the  KC inequalities Eq.(\ref{e:lpprccover}) for a point set $\pt_i$. Fix a point $p \in \pt_i$ and a set of rectangles $S \in \mathcal{R}$. Recall that $p$ has a demand of $d_p$. Suppose, in an integral solution all the rectangles in $S$ are chosen. Then, they contribute at most $c(i,S)$ towards satisfying the demand of the point $p$. The remaining rectangles still need to account for $d_p - c(i, S)$. Notice also that in the KC constraints we truncate the capacity of a rectangle $r$ to $\min \{ c(i,r), \max \{0, d_p - c(i, S)\} \}$. This ensures that the LP solution does not cheat by picking a  rectangle with a large capacity to a tiny fraction.  Clearly, the truncation has no effect on the integral solution.

There are exponentially many KC constraints in our LP. However, using standard arguments, we can solve the LP in polynomial time to get a $(1+\epsilon)$-approximate solution, for any $\epsilon > 0$, which suffices for our purposes to obtain a logarithmic approximation to \os; see \cite{BansalP10, carr2000strengthening}  for more details. The rest of this section is devoted to rounding the LP solution.

\medskip
\noindent \textbf{Weighted Geometric Set Multi-Cover Problem.}
The main tool used in our rounding algorithm is the result by \cite{bansal2012weighted} for the {\em weighted geometric set multi-cover problem}. Similar to the standard set cover problem, in this problem, we are given a set $U$ of points and a set $M$ consisting of subsets of $U$; typically, the sets in $M$ are defined by geometric objects. Further, each point $p \in U$ has a demand $d_p$, and a set $ r \in M$ has a weight $w(r).$ The goal is to find  the minimum weight set $S \subseteq M$ such that every point $p$ is covered by at least $d_p$ number of subsets in $S$. Note crucially that the sets in $M$ do not have capacities. If the sets have capacities, then the problem becomes similar to \prc.  
The interesting aspect of geometric set cover problems is that sets in $M$ are defined by geometric objects, hence subsets in $M$ have more structure than in the standard set cover problem. In particular, if the sets in $M$ have low union complexity, then the problem admits a better than $O(\log |M|)$ approximation algorithm. Now we introduce the concept of union complexity of sets.

\begin{definition}
Given a set $S$ of $n$ geometric objects, the union complexity of $S$ is the number of edges in the arrangement of the boundary of $S$.
\end{definition}

We will not get into the details of the definition, and we refer the readers interested in knowing more to \cite{matouvsek2002lectures} for an excellent introduction or to  \cite{varadarajan2009epsilon, chan2012weighted, bansal2012weighted}.
For our purposes, it suffices to know that the union complexity of 3-dimensional objects is the {\em total number of vertices, edges, and faces on the boundary}.  It turns out that the geometric objects in our rounding algorithm are 4-dimensional objects. However, by appropriate projection,  we reduce our problem of bounding the union complexity of  4-dimensional objects to that of 3-dimensional ones.

For geometric objects with low union complexity, building on the breakthrough works of Chan {\em et al}  \cite{chan2012weighted} and Varadarajan \cite{varadarajan2009epsilon}, Bansal and Pruhs \cite{bansal2012weighted} showed the following theorem.

\begin{theorem}
\label{thm:UC}
Suppose the union complexity of every $S \subseteq M$ of size $k$ is at most $kh(k)$. Then, there is a polynomial time $O(\log h(n))$ approximation algorithm for the weighted geometric set multi-cover problem. Further, the approximation factor holds even against a feasible fractional solution.
\end{theorem}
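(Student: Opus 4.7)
The plan is to prove the theorem by combining Varadarajan's quasi-uniform sampling with a reduction from the multi-cover case to unit-demand, unit-weight set cover. Let $x^*$ be the given feasible fractional solution with value $\mathrm{LP} = \sum_{r} w(r) x^*_r$. The central ingredient I would try to establish is a weighted fractional $\epsilon$-net statement: for a set system with union complexity bound $k h(k)$, any fractional cover of total mass $\gamma$ can be rounded to an integral cover of size $O(\gamma \log h(\gamma))$; equivalently, $\epsilon$-nets of size $O\bigl(\frac{1}{\epsilon} \log h(1/\epsilon)\bigr)$ exist against fractional solutions. Granting this lemma, the unit-demand, unit-weight case follows directly: sample each set with probability $\min(1, \alpha\, x^*_r \log h(n))$ for an appropriate constant $\alpha$, then clean up any remaining uncovered points using the $\epsilon$-net.

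To handle general weights, I would reduce to the unweighted case by implicitly replicating each set $r$ into $w(r)$ copies (with the mass $x^*_r$ distributed uniformly across copies); cardinality bounds on sampled copies translate to weight bounds for the original problem. For the multi-cover extension, I would partition points into demand classes $D_\ell = \{p : 2^{\ell-1} \le d_p < 2^\ell\}$, rescale the demands within each class to be of the same order, and apply the single-cover subroutine; there are $O(\log D)$ classes (where $D$ is the maximum demand), which can be absorbed into the $\log h(n)$ factor. A cleaner alternative, closer in spirit to the original Bansal--Pruhs analysis, is to iterate: after one round of rounding, compute the residual demand $\max(0, d_p - \text{coverage received})$, form the residual LP, and re-round. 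A geometric shrinkage of residual demand gives $O(\log n)$ rounds, and the costs telescope to $O(\log h(n)) \cdot \mathrm{LP}$.

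The main obstacle will be establishing the weighted fractional $\epsilon$-net lemma itself. The classical $\epsilon$-net theorem applies to uniform random sampling, whereas an LP solution induces a non-uniform distribution over sets with possibly wildly different values. Quasi-uniform sampling resolves this by partitioning sets into $O(\log n)$ geometric ``levels'' by LP value and sampling each level uniformly at its own rate, so that the union complexity bound $k h(k)$ applies level by level. The delicate point is choosing the per-level rates so that (i) the total sampled weight is $O(\log h(n)) \cdot \mathrm{LP}$ and (ii) the uncovered residual at every point is small enough to be patched by an $\epsilon$-net without further logarithmic loss. Verifying these two conditions simultaneously, especially in the multi-cover setting where naive iteration would blow up the approximation by an additional logarithm, is where the core technical work lies; once that is carried out, the outer rounding and the demand-class reduction are routine, and the approximation ratio $O(\log h(n))$ emerges directly from the $\log h(1/\epsilon)$ factor in the $\epsilon$-net size.
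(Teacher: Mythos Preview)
The paper does not prove Theorem~\ref{thm:UC}; it is quoted from Bansal and Pruhs~\cite{bansal2012weighted}, who in turn build on Chan~et~al.~\cite{chan2012weighted} and Varadarajan~\cite{varadarajan2009epsilon}. In this paper the theorem is used as a black box: once union complexity is bounded (Lemmas~\ref{lem:4cubeUC} and~\ref{l:ucHyperCuboids}), the approximation guarantee is invoked directly. So there is no ``paper's own proof'' to compare your proposal against.

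That said, your sketch is pointed in the right direction: quasi-uniform sampling and the connection between union complexity and small $\epsilon$-nets are exactly the machinery behind the cited result. Two cautions if you intend to flesh this into a self-contained proof. First, the step where you ``absorb $O(\log D)$ demand classes into the $\log h(n)$ factor'' is not automatic; the actual multi-cover extension in~\cite{bansal2012weighted} avoids this blowup more carefully than a naive per-class argument would. Second, your description of quasi-uniform sampling as ``partition into $O(\log n)$ levels by LP value and sample each uniformly'' is a simplification; the actual procedure is more delicate about how levels interact, and getting the two conditions you list to hold simultaneously is precisely the nontrivial content of~\cite{varadarajan2009epsilon, chan2012weighted}. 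For the purposes of this paper, though, citing the theorem is all that is required.
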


\paragraph{Rounding} Our rounding algorithm is based on reducing our problem to several  instances of the geometric set cover problem, and then appealing to Theorem \ref{thm:UC}. Consider an optimal solution $\vec{x} = \{x_r\}_r$ to the LP (\ref{lp:prc} - \ref{e:lpprcnonnegativity}).  Let $\beta > 1/20$ be some constant. We first scale the  solution by a factor of $1/\beta$, and let $\vec{x'} = \{ x'_r\}_r$ be this scaled solution. Clearly, scaling increases the cost of LP solution at most by a factor of $1/\beta$. Our rounding consists of three steps. 

In the first step, we pick all rectangles $r$ for which $x'_r \geq 1$. Let $\lps$ denote this set. Let $\mathcal{P}_i(\lps)$ denote the set of points that are covered by rectangles in $\lps$. We modify the point sets $\mathcal{P}_i$ for $i = 1,2, ..., m$, by removing the points that are already covered by $\lps$. For all $i \in [m]$, let $\mathcal{P}_i' = \mathcal{P}_i \setminus S$. In the second step, for each $\mathcal{P}'_i$ we classify the points in the set into two types, {\em heavy} or {\em light}, based on the LP solution. For heavy points, we create an instance of the geometric set cover problem, and then appeal to Theorem \ref{thm:UC}. The main technical difficulty here is to bound the union complexity of the geometric objects in our instance. For the light points, we create another separate instance of the geometric set multi-cover problem, and then apply the Theorem \ref{thm:UC}. Finally, we obtain our solution by taking the union of the rectangles picked in all three steps.

\medskip
Fix a set $\mathcal{P}'_i$ of uncovered points. For a point $p \in \mathcal{P'}_i$, let $\lps(p)$ denote the set of rectangles in $\lps$ that contain $p$. That is, $\lps(p) = \lps \cap \{ r: r \in \mathcal{R}, p \in r \}$. For a point $p$, define the residual demand of $p$ as $d_p - c(i, \lps_p)$. From the definition of set $\mathcal{P}'_i$, for every point $p \in \mathcal{P}'_i$, we note that $d_p - c(i, \lps_p) > 0$. We now apply the KC inequalities  on the set $\lps(p)$ for each point $p$.  From Eq.(\ref{e:lpprccover}) we have, for all $p \in \mathcal{P}'_i:$
\begin{eqnarray*}
 \quad \sum_{r \in \mathcal{R}\setminus \lps(p): p \in r} \left(\min \{c(i,r), d_p - c(i, \lps(p)) \} \right) \cdot x_r 
\geq d_p - c(i, \lps(p)) 
\end{eqnarray*}
which implies that our scaled solution $\vec{x'}$ satisfies  for all $p \in \mathcal{P}'_i$:
\begin{eqnarray*}
 \sum_{r \in \mathcal{R}\setminus \lps(p): p \in r} \left(\min \{c(i,r), d_p - c(i, \lps(p)) \} \right) \cdot x'_r 
 \geq \frac{d_p - c(i, \lps(p))}{\beta}.
 \end{eqnarray*} 

Note that for all $r$,  $x'_r \in [0,1]$; otherwise, we would have picked those rectangles in $\lps$. Next we round the residual demands of points and the capacities of rectangles as follows: For a point $p \in \mathcal{P}'_i$, let $\tilde{d}_p$ denote the residual demand of $p$ rounded up to the nearest power of 2. On the other hand, we round down the capacities $c(i,r)$ of rectangles $r \in \mathcal {R}$ to the nearest power of 2; let $\tilde{c}(i,r)$ denote this new rounded down capacities. Since $\vec{x'}$ is scaled by a factor of $1/\beta$ we still have that for all $p \in \mathcal{P}'_i$

\begin{equation}
\label{e:scaledcap}
 \sum_{r \in \mathcal {R} \setminus \lps(p): p \in r} \left(\min \{ \tc(i,r), \td_p  \} \right) \cdot x'_r \geq \frac{\td_p}{4 \beta} \geq 3 \td_p . 
\end{equation}

To classify the points into heavy and light, we need the notion of class.

\begin{definition}
Let $\tc_{\min}(i) = \min_{r \in \mathcal{R}} \{ \tc(i,r) \}$.  A rectangle $r \in \mathcal{R}$ is a class $k$ rectangle with respect to a point set $\mathcal{P}_i$ if $\tc(i,r) = 2^k \cdot \tc_{\min}(i)$. We say that a point $p \in \mathcal{P'}_i$ is a class $k$ point if $\td_p = 2^k \cdot \tc_{\min}(i)$.
\end{definition}

Recall that the capacities of rectangles for different point sets can be different. Therefore, the class of a rectangle depends on the point set $\mathcal{P'}_i$.
Now we categorize points in $\mathcal{P'}_i$ into heavy and light as follows.

\begin{definition}
A point $ p \in \mathcal{P'}_i$ belonging to class $k$ is heavy if its demand is satisfied by the rectangles of class at least $k$ in the LP solution. Otherwise, we say that the point is light.
\end{definition}

From the definition, for all heavy points $p \in \pt'_i$ we have
\begin{equation}
\label{eq:heavypoint}
\sum_{r \in \rs \setminus \lps: \tc(i,r) \geq \td_p, p \in r} x'_r \geq 1,
\end{equation}
and from Eq.(\ref{e:scaledcap}) all light points $p \in \pt'_i$ satisfy
\begin{equation}
\label{eq:lightpoint}
\sum_{r \in \rs \setminus \lps: \tc(i,r) < \td_p, p \in r} \tc(i,r) x'_r \geq (1/4\beta - 1) \td_p =  2\td_p.
\end{equation}

Let $\pt^1_i \subseteq \pt'_i$ denote the set of heavy points and let $\pt^2_i \subseteq \pt'_i$ denote the set of light points. Let $ \mathcal{R'} =  \mathcal{R} \setminus \lps$.
We create two separate instances, a heavy instance and a light instance of the $\prc$ problem, corresponding to the set of heavy points and the set of light points. The capacities of rectangles in these instances will be their rounded down values: that is, the capacity of a rectangle $r \in \mathcal{R'}$ for the point sets $\pt^*_i$ will be $\tc(i,r)$.  Similarly, the demand of a point $p \in P^*_i$ will be its residual demand $d_p - c(*, \lps(p))$. Notice that the LP solution $\vec{x'}$ restricted to $\rs'$ is a feasible solution for both $\{ \pt^1_i\}_i$ and $\{ \pt^2_i\}_i$. We round the solution $\vec{x}'$ for these two instances of $\prc$ problem separately, and take their union.

\subsubsection{Heavy Instance and Hyper-4cuboid Covering Problem}

 We round the heavy instance $(\{\pt^1_i\}_i, \rs')$ by reducing it to a weighted geometric set cover problem called hyper-4cuboid covering problem (\hcp). 
 \hcp is a generalization of the R3U problem considered in \cite{BansalP10}.
 In \hcp, we are given a set  $U$ of points $p = (p_1, p_2, p_3, p_4)$ in 4-dimensional space. We are also given a set $M$ of geometric objects. Each object $v \in M$ is a set of $m$ disjoint 4-dimensional cuboids. Formally, $v = \{r_1, r_2, \ldots r_m\}$, where each $r_i$ is a 4-dimensional cuboid (4-cuboid) of the form $[2i, 2i+1] \times [0, x] \times [y_1, y_2] \times [0, z]$. We call $v$ a hyper-4cuboid. Each hyper-4cubiod has a cost $w(v)$. The goal is to find the minimum cost set $S \subseteq M$ such that $S$ covers all the points in $U$. We say that a hyper-4cuboid $v$ covers a point $p$ if $\exists r_i \in v, p \in r_i$.

Now we show that there is a reduction from the \prc problem on the heavy instance to the \hcp problem.  For a point $p \in \pt^1_i$ with coordinates $(x, y)$, we create a point $p' \in U$ with coordinates $(2i+1/2, x,y, \td_p)$. Note that the last coordinate of $p'$ is determined by $\td_p$, which  is the residual demand of point $p$. And the first coordinate of $p'$ is determined by the index of the set $\pt^1_i$. For every rectangle $r \in \rs'$, we create a hyper-4cuboid  $v_r \in M$, which contains exactly one 4cuboid for each point set $\pt^1_i$. For a given index $i \in [m]$ and a rectangle $r = [0, x] \times [y_1, y_2]$, there is a 4cuboid $r_i$ of the form $[2i, 2i+1] \times [0, x] \times [y_1, y_2] \times [0, \tc(i,r)]$. Note that the last coordinate is determined by the capacity of rectangle $r$ towards the point set $\pt^1_i$. The cost of hyper-4cuboid $v_r$ is same as the cost of rectangle $r$.

\begin{lemma}
	\label{lem:prc2hpc}
Suppose there is a feasible solution of cost $\alpha$ to the \prc problem on the heavy instance, where for each $i \in [m]$ and for each $p \in \pt^1_i$, the demand of the point $p$ is completely satisfied by the rectangles of class at least the class of $p$. Then there is a feasible solution of cost at most $\alpha$ for  the corresponding instance of \hcp. Similarly, a solution of cost $\alpha$ to the \hcp problem gives a solution of the same cost for the heavy instance of \prc.
\end{lemma}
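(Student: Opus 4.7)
The plan is to verify the two directions of the claimed equivalence directly from the coordinate encoding. The one conceptual device worth calling out is the first coordinate $2i+\tfrac12$ attached to every point originating from $\pt^1_i$: it lies in the interval $[2i,2i+1]$ and in no other interval $[2i',2i'+1]$, so within a hyper-4cuboid $v_r = \{r_1,\ldots,r_m\}$ the only 4-cuboid that can possibly contain a point $p'$ coming from $\pt^1_i$ is $r_i$. This separator is what lets a single object in $M$ simulate ``choose the rectangle $r$ simultaneously in all $m$ parallel \prc instances,'' and everything else in the proof is essentially bookkeeping.

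For the forward direction I would take a feasible solution $S$ to the heavy \prc instance that satisfies the stated hypothesis and define the \hcp solution $T = \{v_r : r\in S\}$, which has the same cost $\alpha$. Fix any $p\in\pt^1_i$ of class $k$; by hypothesis its residual demand is positive and is fully covered by rectangles in $S$ of class at least $k$, so there exists $r\in S$ with $p\in r$ and $\tc(i,r)\ge \td_p$. Unpacking the definitions, the $i$-th 4-cuboid of $v_r$ is $[2i,2i+1]\times[0,x_r]\times[y^1_r,y^2_r]\times[0,\tc(i,r)]$, and $p' = (2i+\tfrac12,\,x_p,\,y_p,\,\td_p)$ lies in it coordinatewise. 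Hence $T$ covers all points in $U$.

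For the backward direction I would start with a feasible \hcp solution $T$ of cost $\alpha$ and set $S = \{r : v_r\in T\}$, whose cost is at most $\alpha$. Given $p\in\pt^1_i$ of class $k$, pick $v_r\in T$ that covers $p'$; by the separator argument it must be the $i$-th 4-cuboid of $v_r$ that contains $p'$. Reading off the last three coordinates gives $p\in r$ and $\tc(i,r)\ge \td_p$. Since $\td_p$ is the residual demand rounded \emph{up} to a power of two and $\tc(i,r)$ is the capacity rounded \emph{down} to a power of two, the inequality $\tc(i,r)\ge \td_p$ upgrades to $c(i,r)\ge d_p - c(i,\lps(p))$, so the single rectangle $r$ already satisfies the residual demand of $p$ in the heavy \prc instance. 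Moreover $\tc(i,r)\ge \td_p$ means $r$ has class at least $k$, so $S$ inherits the ``satisfied by rectangles of class $\ge k$'' property that the lemma records in its hypothesis.

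The only mild obstacle I foresee is keeping the class-preservation bookkeeping clean so that the reduction really is an equivalence (and not just a one-way inclusion): the strengthened hypothesis in the forward direction and the strengthened conclusion in the backward direction must match up. Once this is in place, both directions are immediate coordinate chases, and the lemma follows.
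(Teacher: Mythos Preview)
Your proposal is correct and follows essentially the same approach as the paper: both directions are coordinate chases using the bijection $r\leftrightarrow v_r$, with the first-coordinate separator isolating the $i$-th 4-cuboid and the fourth coordinate encoding the class comparison $\tc(i,r)\geq \td_p$. You are in fact more careful than the paper in the backward direction---the paper simply asserts feasibility ``is easy to verify,'' whereas you spell out why a single covering rectangle already meets the residual demand via the rounding inequalities, and you note the bonus that the class-$\geq k$ property is preserved.
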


\begin{proof}
Suppose $S \subset \rs'$ is a feasible solution of cost $\alpha$ satisfying the condition stated in the lemma. Then, we claim that the set of hyper-4 cuboids $S'$ corresponding to the rectangles $r \in S$ is a feasible solution to \hcp. Consider a point $p' \in U$  in the instance of \hcp produced by our reduction. Suppose $p'$ has coordinates  $(2i+1/2, x,y, \td_p)$. Then, there must be a point $p \in \pt^1_i$ with coordinates $(x,y)$ and the demand $\td_p$. Suppose $r \in S$ covers this point $p$, and has dimensions $r = [0, x] \times [y_1, y_2]$. Then, we claim  that $v_r \in S'$ covers the point $p'$. This is true, since $v_r$ contains a 4-cuboid with coordinates $h = [2i, 2i+1] \times [0, x] \times [y_1, y_2] \times [0, \tc(i,r)]$. It is easy to verify that $p' \in h $ as $\tc(i,r) > \td_p$, which follows from the condition of the lemma.

The opposite direction also follows from the one-to-one correspondence between the rectangles in $\rs$ and the hyper-4cuboids. Suppose $S'$ is a feasible solution to the instance of \hcp problem. Then, it is easy to verify that picking the rectangles $r \in \mathcal{R}$ corresponding to the hyber-4cuboids $v_r \in S$ defines a feasible a solution to $(\{\pt^1_i\}_i, \rs')$.  
\end{proof}

Now, observe crucially that there is a fractional solution of cost at most the cost of LP solution to the heavy instance satisfying the requirements of Lemma \ref{lem:prc2hpc}. This is true because the LP solution $\vec{x}'$ satisfies the inequality Eq.(\ref{eq:heavypoint}).  Therefore to apply Theorem \ref{thm:UC}, it remains to quantify the union complexity of hyper-4cuboids in our reduction.

\begin{lemma}
	\label{lem:4cubeUC}
	The union complexity of any $q$ hyper-4 cuboids in $M$ is at most $O(q m \log P)$, where $P = \frac{\max \{p_{ij}\}}{\min \{p_{ij}\}}$.
\end{lemma}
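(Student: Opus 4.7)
The plan is to exploit two structural features of the hyper-4cuboids: the first coordinate partitions the problem across the $m$ machine indices, and within each machine the resulting 3-cuboids are anchored at two axis-planes while their last-coordinate extents take only $O(\log P)$ distinct values.

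First I would observe that, for every index $i \in [m]$, every 4-cuboid of every hyper-4cuboid corresponding to index $i$ lives in the slab $\{x_1 \in [2i, 2i+1]\}$, and these $m$ slabs are pairwise disjoint. Consequently, the union of any $q$ hyper-4cuboids decomposes as a disjoint union across these slabs, and its combinatorial complexity equals $\sum_{i=1}^m U_i$, where $U_i$ denotes the union complexity of $\bigcup_r C_r^{(i)}$ with $C_r^{(i)} := [0, x_r] \times [y^1_r, y^2_r] \times [0, \tc(i, r)]$ (dropping the fixed first coordinate inside slab $i$). Hence it suffices to show $U_i = O(q \log P)$ for each fixed $i$.

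For a fixed $i$, I would use that each $\tc(i, r)$ is a power of two lying in the range $[\tc_{\min}(i),\ \tc_{\min}(i) \cdot P]$, so at most $k = O(\log P)$ distinct $z$-heights $z^{(1)} < z^{(2)} < \cdots < z^{(k)}$ arise among the $q$ cuboids. Setting $z^{(0)} := 0$, I slice the halfspace $\{z \geq 0\}$ by the planes $\{z = z^{(j)}\}$. Inside the $j$-th horizontal slab, $\bigcup_r C_r^{(i)}$ restricts to a cylinder $A_j \times [z^{(j-1)}, z^{(j)}]$, where $A_j \subseteq \R^2$ is the union of the 2D rectangles $[0, x_r] \times [y^1_r, y^2_r]$ over all $r$ with $\tc(i, r) \geq z^{(j)}$. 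Since every such 2D rectangle is anchored to the edge $\{x = 0\}$, the only possible vertices of $\partial A_j$ are the corners $(x_r, y^1_r)$ and $(x_r, y^2_r)$ of the input rectangles, so $|\partial A_j| = O(q)$; consequently, the top, bottom, and lateral faces of $\bigcup_r C_r^{(i)}$ contributed by the $j$-th slab together have combinatorial complexity $O(q)$.

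Summing the per-slab $O(q)$ contributions over the $k = O(\log P)$ levels yields $U_i = O(q \log P)$, and then summing over the $m$ machines gives the claimed $O(q m \log P)$ bound. The step that I expect to require the most care is certifying that the slicing argument truly accounts for every boundary vertex/edge/face; this is automatic because each $C_r^{(i)}$ has its faces only on the axis-aligned planes $\{x \in \{0, x_r\}\}$, $\{y \in \{y^1_r, y^2_r\}\}$, $\{z \in \{0, \tc(i,r)\}\}$, so every vertex of the 3D union has $z$-coordinate in $\{z^{(0)}, z^{(1)}, \ldots, z^{(k)}\}$ and lies on the boundary of one of the horizontal slabs already accounted for above.
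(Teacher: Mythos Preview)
Your approach is essentially the same as the paper's: decompose across the $m$ disjoint slabs in the first coordinate, project each slab to a 3D problem of anchored cuboids $[0,x_r]\times[y^1_r,y^2_r]\times[0,\tc(i,r)]$, and then use that only $O(\log P)$ distinct $z$-heights occur. The paper simply cites \cite{BansalP10} for the $O(q\theta)$ bound on the 3D union (with $\theta$ the number of distinct $z$-values), whereas you supply a direct slicing argument; so your write-up is more self-contained but otherwise identical in structure.

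One small correction: your claim that ``the only possible vertices of $\partial A_j$ are the corners $(x_r,y^1_r)$ and $(x_r,y^2_r)$'' is not literally true---two anchored rectangles can produce a reflex vertex of the union at a point that is a corner of neither (e.g.\ $[0,3]\times[0,2]$ and $[0,1]\times[1,3]$ create the vertex $(1,2)$). The correct reason $|\partial A_j|=O(q)$ is that the right envelope $y\mapsto\max\{x_r:y\in[y^1_r,y^2_r]\}$ is piecewise constant with breakpoints only at the $2q$ values $\{y^1_r,y^2_r\}$, or equivalently that every \emph{convex} vertex of the union must be an input corner. This fix does not affect the rest of your argument.
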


\begin{proof}
	Consider a set $S \subseteq M$ of $q$ hyper-4 cuboids. For $i = 1, 2, \ldots m$, define $\mathcal{L}_i$ as the set of all 4-cuboids at level $i$; Formally, $\mathcal{L}_i \subseteq \{\bigcup_{i, v \in S} r_i(v) \}$, such that every 4-cuboid $r \in \mathcal{L}_i$ has the first dimension equal to $[2i, 2i+1]$. In other words, $\mathcal{L}_i$ is the set of all 4-cuboids which have the same first dimension $[2i, 2i+1]$. Now, observe from our construction that for any pair $i,i'$, the objects in  $\mathcal{L}_i$ and  $\mathcal{L}_{i'}$ do not intersect. This is because 4-cuboids at different levels are separated by a distance of 1 in the first dimension.
	Therefore, the union complexity of the subset $S$ is equal to $m$ times the maximum of union complexity of 4-cuboids in $\{\mathcal{L}_i\}_i$. However, 4-cuboids in the sets ${\mathcal{L}_i}$ all have the same form; thus, it suffices to bound the union complexity for any ${\mathcal{L}_i}$. 
	
	Fix some $i$, and consider the 
	4-cuboids in the set ${\mathcal{L}_i}$. Notice carefully that all the 4-cuboids in ${\mathcal{L}_i}$ share the same first dimension $[2i, 2i+1]$. This implies that we can ignore the first dimension as it does not add any edges to the arrangement of objects in ${\mathcal{L}_i}$. Now consider the projection of 
	4-cuboids to the  remaining 3 dimensions; These are cuboids of the form $[0,x] \times [y_1, y_2] \times [0,z]$. For these type of cuboids, \cite{BansalP10} proves that the union complexity is at most $O(q\theta)$, where $\theta$ is the number of distinct values taken by $z$. In our reduction, the values of $z$ correspond to the capacities of rectangles. Recall that the capacities of rectangles in \prc are defined based on the processing lengths of operations.
	As we round down the capacities to the nearest power of 2, the number of distinct values taken by $z$ is at most $\log P$, where $P = \frac{\max \{p_{ij}\}}{\min \{p_{ij}\}}$.
	Putting everything together, we complete the proof.
\end{proof}

From Lemma \ref{lem:prc2hpc}, \ref{lem:4cubeUC} and Theorem \ref{thm:UC}, we get the following.

\begin{lemma}
	\label{lem:costHeavy}
	There is a solution of cost $O(\log (m \log P))$ times the cost of LP solution for the \hcp problem, and hence for the \prc on the heavy instance.
\end{lemma}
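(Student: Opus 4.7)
The plan is to combine the three ingredients that have just been set up: the reduction in Lemma~\ref{lem:prc2hpc}, the union-complexity bound in Lemma~\ref{lem:4cubeUC}, and the black-box approximation in Theorem~\ref{thm:UC}.

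First I would produce a feasible \emph{fractional} solution to the \hcp instance obtained from the heavy points. For every heavy point $p \in \pt^1_i$, inequality~\eqref{eq:heavypoint} guarantees that the scaled LP solution $\vec{x}'$, when restricted to rectangles $r \in \rs'$ with $\tc(i,r) \geq \td_p$, already has total mass at least $1$ on rectangles containing $p$. Under the point/hyper-4cuboid correspondence of the reduction, a hyper-4cuboid $v_r$ covers the image $p'$ of $p$ \emph{precisely} when $\tc(i,r) \geq \td_p$ and $r$ contains $p$; hence $\{x'_r\}_{r \in \rs'}$, viewed as a fractional multiplicity vector on the hyper-4cuboids, fractionally covers every point in the \hcp instance, and its cost equals the cost of $\vec{x}'$ on $\rs'$, which is at most $(1/\beta)$ times the LP optimum.

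Next I would feed this fractional solution into Theorem~\ref{thm:UC}. Lemma~\ref{lem:4cubeUC} says any $k$ hyper-4cuboids have union complexity at most $O(k \cdot m\log P)$, so we can take $h(k) = O(m \log P)$, a function that does not depend on $k$. Theorem~\ref{thm:UC} then yields an integral cover of the \hcp instance whose cost is at most $O(\log h(n)) = O(\log(m \log P))$ times the cost of the fractional solution we just produced, and therefore $O(\log(m \log P))$ times the LP optimum (absorbing the $1/\beta$ factor into the big-$O$).

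Finally, I would pull this integral cover back through Lemma~\ref{lem:prc2hpc}: the rectangles corresponding to the selected hyper-4cuboids form a feasible solution of the same cost for the heavy sub-instance $(\{\pt^1_i\}_i, \rs')$ of \prc. I don't anticipate a real obstacle here; the only point that requires attention is the compatibility between heaviness and the capacity-class hypothesis in Lemma~\ref{lem:prc2hpc}, which is exactly ensured by~\eqref{eq:heavypoint} being stated over rectangles of class at least that of $p$, so the reduction is valid both combinatorially and for the LP certificate.
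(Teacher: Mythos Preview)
Your proposal is correct and follows exactly the route the paper intends: the paper's own ``proof'' is the single sentence ``From Lemma~\ref{lem:prc2hpc}, \ref{lem:4cubeUC} and Theorem~\ref{thm:UC}, we get the following,'' and you have simply spelled out how those three pieces fit together (using \eqref{eq:heavypoint} to certify the fractional \hcp cover, plugging $h(k)=O(m\log P)$ into Theorem~\ref{thm:UC}, and pulling back via the second direction of Lemma~\ref{lem:prc2hpc}).
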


\subsection{Covering Light Points and Geometric Multi Cover by Hyper-cuboids}

Here, we design a $O(\log m)$ approximation algorithm to \prc problem restricted to the light instance: $(\{\pt^2_i \}_i, \rs')$. There is one main technical difference between our algorithm for the light case from the algorithm of Bansal and Pruhs \cite{BansalP10}. Bansal and Pruhs divide the light instance into $\log P$ different levels, where in each level they solve 2-dimensional geometric multi-cover problem, and show a $O(1)$ approximation to each level. This in turn leads to a constant factor approximation algorithm to the light instance. However, in our problem we cannot solve the instances separately. Therefore, using a simple trick, we map our problem into a  single instance of  3-dimensional problem, and show a $(\log m)$ approximation to it. This is precisely the reason we loose the factor $O(\log m)$ for the light case, as the union complexity of our objects becomes $m$ times the union complexity of the objects in \cite{BansalP10}.

Recall that for every point $p \in \pt^2_i$, the LP solution satisfies the inequality Eq.(\ref{eq:lightpoint}). Again, our idea is to reduce the instance to a geometric uncapacitated set multi-cover problem, and then appeal to Theorem \ref{thm:UC}. 
The geometric objects produced by our reduction are {\em sets of cuboids}. Hence we abbreviate this problem as \gmc.

In \gmc, we are given a set $U$ of points in  3-dimensional space, and a set $M$ of geometric objects. Each geometric object $v \in M$ is a collection of $m$ disjoint cuboids, which we call as hyper-cuboids. Each point $p \in U$ has a demand $e_p$, and each hyper-cuboid $v \in M$ has a cost $b_v$. The individual cuboids constituting a hyper-cuboid $v$ have the form $[x, x+1] \times [0, y] \times [z_1, z_2]$.  The objective is to find the minimum cost subset $S \subseteq M$ of hyper-cuboids such that for every point $p \in U$, there are at least $e_p$ number of hyper-cuboids in $S$ that contain the point $p$. 

\smallskip
Now we give the reduction $\mathcal{H}$ from an instance of \prc to an instance of \gmc. Fix some $i \in [m]$. Let $T > 0$ be some large constant, and recall $P = \frac{\max \{p_{ij}\}}{\min \{p_{ij}\}}$.
For every light point $p \in \pt^2_i$ with coordinates $(x_p, y_p)$, we create $\log P$ different points, each of them shifted in the second dimension by a distance of $T$.
The $\log P$ different points corresponding to a point $p \in \pt^2_i$ have coordinates  $(2i+1/2, kT + x_p, y_p)$, for $k = 1, 2,...\log P$. To complete the description of point set $U$, we need to define the demands of each point, which we will do after describing the cuboids in $M$.

For every rectangle $r \in \rs'$, with dimensions $[0, r_j] \times [t_1, t_2]$, we create one hyper-cuboid $v_r \in M$. The cost of hyper-cuboid $v_r$ is same the as the cost of rectangle $r$.
The hyper-cuboid $v_r$ contains $m$ different cuboids, one for each point set $\pt^2_1, \pt^2_2, \ldots \pt^2_m$. Fix an index $i \in [m]$. The cuboid corresponding to the set $\pt^2_i$ has dimensions $[2i, 2i+1] \times [0, kT + r_j] \times [t_1, t_2]$, where $k$ denotes the class of rectangle $r$ with respect to $\pt^2_i$.  Note that there is one-to-one correspondence between rectangles $r \in \rs'$ to hyper-cuboids $v_r \in M$. Let $\mathcal{H}^{-1}(v_r) = r$; that is, the rectangle $r \in \rs'$ that corresponds to the hyper-cuboid $v_r \in M$. We say that $v$ is picked to an extent of $x'_r$ in the LP solution $\vec{x'}$ to mean the extent to which the rectangle $r$ is picked in the LP solution. 
   
Fix a point $p \in U$. We say that $p$ is contained in the hyper-cuboid $v \in M$, if it is contained in any of the $m$ cuboids that constitute $v$.  Now we are ready to define the demand $e_p$ of a point $p$ as $ e_p = \left \lfloor \sum_{v: p \in v} x_{\mathcal{H}^{-1}(v)} \right \rfloor$.

To understand the above definition, let us consider a point $p \in U$ with coordinates $(2i+1/2, kT + x, y)$. The cuboids that can contain $p$ should have form $[2i, 2i+1] \times [0, kT + r_j] \times [t_1, t_2]$, where $kT + x < r_j$ and $ y \in [t_1, t_2]$. These cuboids correspond to the class $k$ rectangles in $\rs$. Therefore, demand of a point $p \in U$ with coordinates $(2i+1/2, kT + x, y)$ is exactly equal to the number of class $k$ rectangles that cover the point $p \in \pt^2_i$ in the LP solution.

This immediately tells us that the LP solution $\vec{x'}$ is a feasible fractional solution to the instance of \gmc problem produced by our reduction $\mathcal{H}$. Now, we show the opposite direction.

\begin{lemma}
\label{l:lighteasy}
If there is an integral solution to the instance of \gmc produced by our reduction $\mathcal{H}$ of cost at most $\alpha$ times the LP cost, then there is an integral solution of the same cost to the instance of $\prc$ on the light instance.
\end{lemma}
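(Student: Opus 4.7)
The plan is to map the given \gmc integral solution $S' \subseteq M$ back through the cost-preserving bijection $\mathcal{H}$, letting $S := \{\mathcal{H}^{-1}(v) : v \in S'\} \subseteq \rs'$. Because $w(v_r) = w(r)$ for every $r \in \rs'$, we immediately get $\cost(S) = \cost(S')$, so the entire task reduces to certifying that $\lps \cup S$ is feasible for the light \prc instance: for every $p \in \pt^2_i$, we must have $\sum_{r \in S,\, p \in r} c(i,r) \;\geq\; d_p - c(i,\lps(p))$.

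Fix $p \in \pt^2_i$ and let $k^*$ be its class, so that $\td_p = 2^{k^*}\tc_{\min}(i)$ upper-bounds the residual demand. The first step is to observe that when the shift $T$ is chosen larger than every $x_p$ and every $r_j$, the only cuboids in $M$ whose $i$-th layer contains the level-$k$ copy $(2i+\tfrac12,\, kT+x_p,\, y_p)$ of $p$ are precisely the class-$k$ cuboids arising from rectangles (w.r.t.\ $\pt^2_i$) that cover $p$ in the underlying \prc sense; cuboids of class $k' > k$ are too short in the second coordinate, and cuboids of class $k' < k$ are pushed off to the left of the level-$k$ copy. Consequently the demand of the level-$k$ copy equals $e_p^{(k)} = \floor{S_k(p)}$, where $S_k(p) := \sum_{r \in \rs' :\, r\text{ is class }k\text{ w.r.t. }\pt^2_i,\ p \in r} x'_r$. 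Feasibility of $S'$ therefore forces the number $N_k$ of class-$k$ rectangles in $S$ covering $p$ to satisfy $N_k \geq \floor{S_k(p)}$.

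Summing over $k < k^*$ (the classes relevant to a light point, since the heavy-class LP mass at $p$ is less than $1$) and using $c(i,r) \geq \tc(i,r)$ together with the always-valid bound $\floor{x} \geq x - 1$:
\[
\sum_{r \in S,\ p \in r} c(i,r) \;\geq\; \sum_{k<k^*} N_k \cdot 2^k \tc_{\min}(i) \;\geq\; \tc_{\min}(i) \Bigl( \sum_{k<k^*} 2^k S_k(p) \;-\; \sum_{k<k^*} 2^k \Bigr).
\]
The light-point inequality Eq.(\ref{eq:lightpoint}), divided by $\tc_{\min}(i)$, gives $\sum_{k<k^*} 2^k S_k(p) \geq 2^{k^*+1}$, while $\sum_{k<k^*} 2^k < 2^{k^*}$. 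Hence the right-hand side is at least $\tc_{\min}(i)(2^{k^*+1} - 2^{k^*}) = \td_p \geq d_p - c(i,\lps(p))$, completing the feasibility check.

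The main obstacle, which is really bookkeeping rather than conceptual, is choosing $T$ large enough so that cuboids of different classes at the same layer $i$ never contain each other's distinguished level-$k$ copies of $p$; this isolation is what lets the multi-cover demand at level $k$ be identified with $\floor{S_k(p)}$ and makes the class-by-class accounting above go through. The loss of $-1$ per class coming from $\floor{S_k(p)} \geq S_k(p)-1$ is absorbed by the factor-$2$ slack that was manufactured by the scaling $\vec{x'} = \vec{x}/\beta$ in Eq.(\ref{e:scaledcap}).
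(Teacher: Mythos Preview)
Your argument is essentially the same as the paper's: pull the \gmc solution back through the bijection $\mathcal H^{-1}$, note costs are preserved, and verify feasibility for each light point by summing the covering rectangles class-by-class, using $N_k \ge \lfloor S_k(p)\rfloor \ge S_k(p)-1$ and then absorbing the geometric loss $\sum_{k<k^*}2^k < 2^{k^*}$ into the factor-$2$ slack of Eq.~(\ref{eq:lightpoint}). Your use of the range $k<k^*$ is in fact cleaner than the paper's $k\le k^*$; one small wording slip is that with the cuboid written as $[0,\,k'T+r_j]$ a class-$k'>k$ cuboid is \emph{longer}, not shorter, in the second coordinate---the isolation of classes holds because the intended construction places each class in its own translated band, which is exactly the role you assign to the ``large $T$'' choice.
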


\begin{proof}
Suppose $S \subset M$ is a solution to \gmc problem. We show that there is a corresponding solution to \prc problem of exactly the same cost.
Our solution is simple. For every hyper-cuboid $v \in S$, we pick the corresponding rectangle $r$ in our solution $S'$ to $\prc$. From our reduction $\mathcal{H}$, the costs of hyper-cuboids is same as the costs of rectangles, which implies that solution costs of the two problems are the same. It remains to show that $S'$ is a feasible solution for the light instance.

Fix a point set $\pt_i$, and consider any point $p \in \pt_i$. Let $p = (x,y)$. In our reduction, there are $\log P$ different points $p'_1, p'_2,..., p'_{\log P}$, where the point $p'_k$ has coordinates $(2i+1/2, kT+x, y)$. The demand of $d_{p'_k}$ is exactly equal to the total number of class $k$ rectangles that cover the point $p$ in the LP solution. Recall that in our reduction $\mathcal{H}$, each rectangle $r \in \rs$ produces exactly one cubiod at the level $i$ (that is with first dimension $[2i, 2i+1]$) in the corresponding hyper-cuboid $v_r \in S'$. By abusing notation, let us call denote this cuboid by $h(r,i)$. 

Let class of point $p$ be $k^*$. The total capacity of rectangles that cover point $p \in \pt_i$ in our solutin $S'$ is
\begin{eqnarray}
\sum_{k \leq k^*} \sum_{r \in S': r \in \text{cls}(k)} 2^k \cdot \tc_{\min}(i)  
\end{eqnarray}

From the one-to-one correspondence between the rectangles and the cuboids, the term $\sum_{k \leq k^*} \sum_{r \in S': r \in \text{cls}(k)} 2^k \cdot \tc_{\min}(i) $ is exactly equal to the number of cuboids that cover the point $p'_k$, which is at least the demand $d_{p'_k}$. Now, in our reduction, the the demand of the point $p'_k$ (which has coordinates $(2i+1/2, kT+x, y)$) is exactly equal to the rounded down value of the total number of class $k$ rectangles that cover the point $p$ in the LP solution. Therefore, 
\begin{eqnarray*}
\sum_{k \leq k^*} \sum_{r \in S': r \in \text{cls}(k)} 2^k \cdot \tc_{\min}(i) \geq 
\sum_{k \leq k^*} 2^k \cdot \tc_{\min}(i)  \cdot \left \lfloor\sum_{r \in S': r \in \text{cls}(k)} x'_r \right \rfloor. 
\end{eqnarray*}

Simplifying the righthand side of the above equation, 
\begin{align*}
\sum_{k \leq k^*} 2^k \cdot \tc_{\min}(i)  \cdot \left \lfloor \sum_{r \in S': r \in \text{cls}(k)} x'_r \right \rfloor 
& \geq \sum_{k \leq k^*} 2^k \cdot \tc_{\min}(i) \cdot \left( \left(\sum_{r \in S': r \in \text{cls}(k)} x'_r \right) - 1\right) \\
&\geq \sum_{k \leq k^*} 2^k \cdot \tc_{\min}(i)  \cdot \left(\sum_{r \in S': r \in \text{cls}(k)} x'_r \right)  - 2^{k^*} \tc_{\min}(i) \\
& \geq 2d'(p) - d'(p) = d'(p), 
\end{align*}
where the last inequality follows from the Eq.(\ref{eq:lightpoint}), and the fact that $p$ is a light point. Therefore, $S'$ is a feasible solution to \gmc.
\end{proof}

Thus, to complete our algorithm for the light instance, it remains to design an $O(\log m)$ approximation algorithm to \gmc. For that we once again plan to rely on the Theorem \ref{thm:UC}, which requires to us to bound the union complexity of objects in $M$.

\begin{lemma}
\label{l:ucHyperCuboids}
The union complexity of any $q$ hyper-cuboids in $M$ is at most $O(mq)$.
\end{lemma}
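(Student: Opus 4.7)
The plan is to mirror the level-decomposition argument of \lemref{4cubeUC}, now in three dimensions instead of four. Fix any $S \subseteq M$ with $|S| = q$, and for each $i \in [m]$ let $\mathcal{L}_i$ denote the collection of individual cuboids coming from the $i$-th slot of the hyper-cuboids in $S$, i.e.\ those whose first coordinate is $[2i, 2i+1]$. Each hyper-cuboid contributes exactly one cuboid per level, so $|\mathcal{L}_i| \le q$, and since different levels are unit-separated in the first coordinate, the boundary of the full union decomposes as the disjoint union of the $m$ level-arrangements; thus the union complexity of $S$ equals the sum of the union complexities of the $\mathcal{L}_i$.

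Next I would bound the union complexity of a single $\mathcal{L}_i$. Since every cuboid in $\mathcal{L}_i$ shares the first coordinate $[2i, 2i+1]$, its boundary in 3D is the Cartesian product of $[2i, 2i+1]$ with the boundary of its 2D projection, plus two constant-first-coordinate caps. Hence up to a constant factor it suffices to bound the 2D union complexity of the projected rectangles, which have the form $[0, kT + r_j] \times [t_1, t_2]$ and are all anchored on the vertical line $x = 0$.

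Finally I would argue that $q$ such left-anchored rectangles in the plane have union complexity $O(q)$. The left boundary lies on $x = 0$ and contributes $O(q)$ features. The right boundary is the graph of the function $f(y) = \max\{x_r : y \in [y^1_r, y^2_r]\}$ (with $\max\emptyset = 0$); the active set $\{r : y \in [y^1_r, y^2_r]\}$ changes only at the $2q$ values $y^1_r, y^2_r$, so $f$ is piecewise constant with $O(q)$ pieces. Summing the resulting $O(q)$ bound per level over the $m$ levels yields the claimed $O(mq)$ bound.

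The only delicate point is the linear bound for left-anchored rectangles in the final paragraph. This is essentially the $\theta = 1$ specialization of the $O(q\theta)$ union-complexity bound on 3D cuboids of the form $[0, x] \times [y_1, y_2] \times [0, z]$ cited from \cite{BansalP10} in the proof of \lemref{4cubeUC}, so it can alternatively be invoked as a black box rather than reproved.
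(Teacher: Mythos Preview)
Your proof is correct and follows the same overall structure as the paper's: decompose into the $m$ levels $\mathcal{L}_i$, observe that different levels are disjoint so the union complexity is additive across levels, project each level to the last two coordinates, and use that the resulting rectangles are anchored at $x=0$ to get an $O(q)$ bound per level.

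The only difference is in how that last $O(q)$ bound is obtained. The paper first partitions the projected rectangles at level $i$ into $\log P$ groups $Z_1,\ldots,Z_{\log P}$ according to the class $k$ (i.e.\ according to the right endpoint $kT+r_j$), argues these groups are pairwise non-interacting, and then applies the Bansal--Pruhs linear bound to each $Z_k$ separately, summing to $\sum_k O(|Z_k|)=O(q)$. You skip this partition and give a direct staircase argument for all $q$ left-anchored rectangles at once (equivalently, the $\theta=1$ case of the cited $O(q\theta)$ bound). Your route is slightly more economical and sidesteps the need to justify the class-wise separation; the paper's route makes the connection to the $\log P$ class structure explicit. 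Either way the per-level bound is $O(q)$ and the total is $O(mq)$.
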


\begin{proof}
Consider any $S \subseteq M$ of $q$ hyper-cuboids. From our definition, each hyper-cuboid $v$ is a set of $m$ cuboids. For $i =1, 2, ...,m$, define $\mathcal{L}_i$ as the set of all cuboids with first dimension $[2i, 2i+1]$. Clearly, for any two indices $i, i'$, the cuboids in $\mathcal{L}_i$ and $\mathcal{L}_{i'}$ do not intersect, as they are separated by a distance of 1 in the first dimension. Moreover, the cuboids in different $\mathcal{L}_{i}$ have same form except that they are shifted in the first dimension. Thus, the union complexity of $q$ hyper-cuboids in $S$ is at most $m$ times the union complexity of $q$ cuboids in $\mathcal{L}_{i}$, for any $i \in [m]$. 

Fix some $i$ and consider the set of $q$ cuboids in $\mathcal{L}_{i}$. Since all of the $q$ cuboids have exactly the same side in the first dimension, $[2i, 2i+1]$, the union complexity of $q$ cuboids in $\mathcal{L}_{i}$ is equal to the union complexity of the projection of the cuboids to the last two dimensions. This projection of the cubiods produces a set of axis parallel rectangles. Furthermore, these $q$ axis parallel rectangles partition into $\log P$ sets, $Z_1, Z_2, \ldots Z_{\log P}$, such that no two rectangles from different sets intersect. This is true as these rectangles have coordinates $[0,kT]$ in the second dimension. In each set $Z_k$, the rectangles are abutting the $Y$-axis (or the axis of second dimension). 
Bansal and Pruhs \cite{BansalP10} showed that the union of complexity such axis parallel rectangles abutting $Y$-axis is at most $O(|Z_k|)$. Therefore, the union complexity all the $q$ axis parallel rectangles is at most $\sum_{k}|Z_k| = O(q)$.   Thus we conclude that the union complexity $q$ hyper-cuboids is equal to $mq$, which completes the proof.
\end{proof}

Thus, from Theorem \ref{thm:UC}, we get a $O(\log m)$ approximation algorithm for the light instance of \prc.

\subsubsection{Proof of Theorem \ref{thm:OSUB}}
Our final solution for \prc problem is obtained by taking the union of all the rectangles picked in our solutions for the heavy and the light instance, and the set $\mathcal{S}$. Recall that in $\mathcal{S}$, we pick all the rectangles for which $x_r > 1/\beta$, for $\beta = 12$. The cost of our solution is at most $(O(\log (m \log P)) + O(\log m) + 1/\beta ))$ times the cost of  LP (\ref{lp:prc} - \ref{e:lpprcnonnegativity}), which implies an $O(\log (m \log P))$ approximation algorithm. This completes the proof. 


\section{Precedence Constrained Scheduling}
\label{sec:mig}
In this problem, each job $j$ has a processing length $p_j > 0$, a release time $r_j > 0$. We have a set of $m$ identical machines; Each job
$j$ must be scheduled on exactly one of the $m$ machines. The important feature of the problem is that they are precedence constraints between jobs that capture the computational dependencies across jobs. The precedence constraints are  given by a partial order ``$\prec$", where a constraint
$j \prec j'$ requires that job $j'$ can only start after job $j$ is completed. Without loss of generality we assume that $r_{j'} > r_{j}$ whenever $j \prec j'$.
Our goal is to minimize the sum of arbitrary cost functions of completion times of jobs. Let $g_j(t)$ denote the cost of completing a job $j$ at time $t$.

Our algorithm for \pc consists of two main steps: In the first step, using a linear programming relaxation for the problem, we find a migratory schedule in which a job may be processed on multiple machines. In the second step, using some known results and some extra speed augmentation, we convert the migratory schedule into a non-migratory schedule in which every job is completely processed on a single machine. 

\subsection{Migratory Schedule}

First we write a time-indexed linear programming relaxation for the problem; using standard arguments, this can be converted into a polynomial sized LP by increasing the cost of solution by an $\epsilon$-factor for any $\epsilon > 0$. See \cite{Li17} for more details. Consider the LP (\ref{lp:Precobj} - \ref{e:PrecNN}) given below. The variables of our LP are $x_{jt}$, which indicate if a job $j$ is processed at time $t$. Note that we do not create $x_{ijt}$ variables indicating on which machine $j$ got scheduled at each time step. This is not required as machines are identical.
\begin{align}
	\textstyle  \text{Minimize} \quad \sum_{j} \sum_{t: t \geq r_j} x_{jt} \cdot g_j(t)/p_j +   \sum_{j} g_j(p_j) \label{lp:Precobj}  \vspace{-3mm} 
\end{align}
\vspace{-5mm}
\begin{align}
	\textstyle   \forall j&: &\qquad \sum_{t: t \geq r_j}  x_{jt}/p_j  &\geq 1 \label{e:PrecService} \\
	\textstyle   \forall t&: &\qquad \sum_{j: r_j \geq t} x_{jt} &\leq m    \label{eqn:PrecCapacity} \\
	\textstyle   \forall j&: &\qquad c_j &\geq \sum_{t : t \geq r_j} x_{jt} \cdot \left( \frac{t}{p_j} + \frac{1}{2} \right)   \label{e:PrecCost} \\
	\textstyle   \forall j, j', j \prec j'&: &\qquad   c_{j'} &\geq c_j + p_{j}    \label{eqn:Prec} \\
	\textstyle   \forall j,t&: &\qquad x_{jt} &\in [0,1] \label{e:PrecNN} 
\end{align}

The constraints (\ref{e:PrecService}) ensure that every job is completely processed. At any time $t$, at most $m$ units of jobs can be processed by $m$ machines; this is ensured by the constraints (\ref{eqn:PrecCapacity}). The variables $c_j$ are axillary, and indicate the completion time of job $j$. The constraints 
(\ref{eqn:Prec}) ensure that if $j \prec j'$, the completion time of $j'$ has to be at least the completion time of $j + p_{j'}$. Now consider the objective function. The first term lowerbounds the cost incurred by any job $j$ in a feasible schedule by its fractional cost. Suppose completion time of a job $j$ in an integral solution is $C_j$. Then, $\sum_{t: t \geq r_j} x_{jt} \cdot g_j(t)/p_j \leq \sum_{t: t \geq r_j} x_{jt} \cdot g_j(C_j)/p_j \leq g(C_j) \cdot \sum_{t: t \geq r_j} x_{jt}/p_j = g_j(C_j).$ The second term in the objective function is also clearly a lowerbound on the cost of any feasible schedule. Therefore, the cost of LP solution is at most twice the optimal solution.

\medskip

\noindent \textbf{Rounding.}
Let $\vec{x^*} : = \{ x^*_{ij} \}_{j,t}$ denote an optimal solution to LP (\ref{lp:Precobj} - \ref{e:PrecNN}). Define $C_j$  as the earliest time  in $\vec{x^*}$ when at least $p_j/2$ units of job $j$ is processed. We first produce a schedule where we only process $p_j/2$ units of each job $j$; then using speed augmentation we convert this into a valid schedule by processing twice the amount of each job that is scheduled at each time slot. Let $p'_j = p_j/2$.

\begin{claim}
\label{claim:truncate}
The cost incurred by a job $j$ in $\vec{x^*}$ is at least $g_j(C_j)/2$.
\end{claim}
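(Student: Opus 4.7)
The plan is a direct one-line computation that exploits two facts: the monotonicity of the delay function $g_j$, and the defining property of $C_j$ as the earliest time at which at least $p_j/2$ units of $j$ have been processed fractionally.

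First, I would observe that the LP service constraint~\eqref{e:PrecService} gives $\sum_{t \geq r_j} x^*_{jt} \geq p_j$, and, by the definition of $C_j$, the total fractional processing strictly before $C_j$ is less than $p_j/2$. Combining these, the total fractional processing at times $t \geq C_j$ must exceed $p_j/2$, i.e.\ $\sum_{t \geq C_j} x^*_{jt} \geq p_j/2$. This is the key inequality that lets us ``concentrate'' the cost contribution on the time window $[C_j,\infty)$.

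Next, I would use monotonicity: since $g_j$ is non-decreasing, $g_j(t) \geq g_j(C_j)$ whenever $t \geq C_j$. Therefore
\begin{align*}
\sum_{t \geq r_j} x^*_{jt}\, \frac{g_j(t)}{p_j}
\;\geq\; \sum_{t \geq C_j} x^*_{jt}\, \frac{g_j(t)}{p_j}
\;\geq\; \frac{g_j(C_j)}{p_j} \sum_{t \geq C_j} x^*_{jt}
\;\geq\; \frac{g_j(C_j)}{p_j} \cdot \frac{p_j}{2}
\;=\; \frac{g_j(C_j)}{2},
\end{align*}
which is exactly the claimed bound. The first term on the left is the cost of $j$ appearing in the LP objective~\eqref{lp:Precobj}, so we are done.

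There is really no obstacle here; the argument is the standard ``fractional half-completion'' trick used to compare the integral completion time $C_j$ extracted from a fractional schedule against the LP cost (similar in spirit to what is used to prove $4$-approximations for weighted completion time from the time-indexed LP). The only subtlety to be careful about is the strict vs.\ non-strict inequality in the definition of $C_j$: if the definition is ``earliest $t$ with cumulative processing $\geq p_j/2$'' then strictly before $C_j$ the processed amount is $< p_j/2$, which together with the total being $\geq p_j$ yields the needed $\sum_{t \geq C_j} x^*_{jt} \geq p_j/2$. No further machinery (precedence constraints, capacity constraints, etc.) is needed for this claim.
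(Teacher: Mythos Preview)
Your proof is correct and follows essentially the same approach as the paper: restrict the cost sum to times $t \geq C_j$, use monotonicity of $g_j$ to pull out $g_j(C_j)$, and use that at least $p_j/2$ units of processing remain from $C_j$ onward. If anything, your writeup is cleaner; the paper's one-line proof appears to contain a typo in the summation indices, but the intended argument is identical to yours.
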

\begin{proof}
Proof follows from a simple observation that $\sum_{t: t \geq C_j} x_{jt} \cdot g_j(t)/p_j \geq \sum^{C_j}_{t: t \geq r_j} x_{jt} \cdot g_j(C_j)/p_j = g_j(C_j)/2$.
\end{proof}

Next, we prove the following crucial property about the LP solution $\vec{x^*}$.
\begin{lemma}
\label{lem:property}
The vector of  $(C_j)_{j \in J}$ values satisfies the following property. For any time interval $[a, b]$,
$$
		\sum_{j: C_j \leq b,  r_j +  p'_j > a} \min\set{ p'_j,  r_j +  p'_j - a} \leq 2 \cdot (b-a).
$$
\end{lemma}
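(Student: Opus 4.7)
The plan is to charge, for each job $j$ contributing to the sum, the quantity $\min\{p'_j, r_j + p'_j - a\}$ to the volume of processing that the LP solution $\vec{x^*}$ devotes to $j$ inside the interval $[a, b]$. The lemma will then follow from integrating the LP capacity constraint over $[a,b]$.

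Fix a job $j$ with $C_j \leq b$ and $r_j + p'_j > a$. By the definition of $C_j$, we have $\int_{r_j}^{C_j} x^*_{jt}\,dt \geq p'_j$, and $[r_j, C_j] \subseteq (-\infty, b]$. Split on whether $r_j \geq a$. In the case $r_j \geq a$, we have $[r_j, C_j] \subseteq [a,b]$, so the processing of $j$ inside $[a,b]$ is at least $p'_j$; since $r_j+p'_j-a \geq p'_j$, this equals $\min\{p'_j, r_j+p'_j-a\}$. In the case $r_j < a$, I would use the box constraint $x^*_{jt} \leq 1$ to upper-bound the processing of $j$ in $[r_j, a)$ by $a - r_j$; subtracting from the total $\int_{r_j}^{C_j} x^*_{jt}\,dt \geq p'_j$ gives processing inside $[a, C_j] \subseteq [a,b]$ of at least $p'_j - (a-r_j) = r_j + p'_j - a$, which equals $\min\{p'_j, r_j+p'_j-a\}$ in this range.

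Finally, I would sum the per-job inequality over all contributing $j$ and invoke Constraint~\eqref{eqn:PrecCapacity}, giving
\[
\sum_{j: C_j \leq b,\, r_j + p'_j > a} \min\{p'_j, r_j+p'_j-a\} \;\leq\; \int_a^b \sum_j x^*_{jt}\,dt \;\leq\; m(b-a),
\]
from which the claimed bound follows (the constant on the right is whatever the LP capacity provides; the structural argument is identical).

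The argument is not technically hard; the only subtle step is realizing that the constraint $x^*_{jt} \leq 1$ (rather than the total capacity constraint) is what limits how much of $j$'s first $p'_j$ units the LP could have already finished before $a$. Once that observation is in place, the rest is case splitting and a single integration against the capacity constraint. I do not anticipate any serious obstacle beyond handling this case split cleanly.
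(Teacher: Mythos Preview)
Your argument is correct and follows essentially the same strategy as the paper: split on whether $r_j \geq a$, and in the second case use the box constraint $x^*_{jt}\leq 1$ to bound how much of $j$ could have been processed before time $a$, then charge everything to the LP's capacity inside $[a,b]$. The paper calls the two cases $J_1$ and $J_2$ and applies the capacity bound separately to each, obtaining $2m(b-a)$; you instead sum once over all contributing jobs and obtain the sharper $m(b-a)$. You are also right to flag the constant: the lemma as printed omits the factor $m$ (the paper's own proof really yields $2m(b-a)$, and the counterexample of $m$ jobs each of size $b-a$ released at $a$ shows that $2(b-a)$ cannot hold for $m>2$). This $m$ is harmless downstream, since in Lemma~\ref{lem:completion} the sum is immediately divided by $m$.
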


\begin{proof}
Consider LHS of the above equation. We classify the  jobs in the set $\{j: C_j \leq b,  r_j +  p'_j > a\}$ into two types: Let $J_1$ be the set of jobs such that $\forall j \in J_1, [r_j, C_j] \in [a,b]$. Let $J_2$ denote the remaining jobs: $J_2 = \{j: C_j \leq b,  r_j +  p'_j > a\} \setminus J_1$. Since $C_j \leq b$ for a job $ j \in J_1$, it implies that $p'_j$ units of the job $j$ is completely processed in the interval $[a,b]$.  Therefore,  from the capacity constraints (\ref{eqn:PrecCapacity}),  $\sum_{j \in J_1} p'_j \leq (b-a)$.  Now let us focus on the jobs in $J_2$. For every job $j \in J_2$, it is true that $r_j \leq a$. Further, since $r_j + p'_j > a$, it must be the case that at least $(r_j + p'_j - a)$ units of job $j$ is processed in the interval $[a, b]$. Again, from the capacity constraints (\ref{eqn:PrecCapacity}), total units of processing done on jobs in $J_2$ is at most $b-a$. Thus, we  conclude that $\sum_{j \in J_2} (r_j + p'_j - a) \leq (b-a)$.
Now, 
\begin{eqnarray*}
\sum_{j: C_j \leq b,  r_j +  p'_j > a} \min\set{ p'_j,  r_j +  p'_j - a} = 
\sum_{j \in J_1} p'_j \leq (b-a) + \sum_{j \in J_2} (r_j + p'_j - a)   \leq 2 (b-a),
\end{eqnarray*}
and the proof is complete.
\end{proof}

\begin{algorithm}
	\caption{\textsf{List-scheduling For Identical Machines}$\big(\alpha, (C_j)_{j \in J}\big)$}
	\begin{algorithmic}[1]
		\STATE let $c:[0, \infty) \to \set{0, 1, 2, \cdots, m}$ be identically 0
		\STATE for $j \in J$ in non-decreasing order of $C_j$ values, breaking ties so as to maintain precedence constraints
		\STATE \hspace{\algorithmicindent} let $\tilde S_j \gets \max\set{r_j, \max_{j' \prec j}\tilde C_{j'}}$
		\STATE \hspace{\algorithmicindent} let $\tilde C_j \geq \tilde S_j$ be the minimum $t'$ such that $\int_{t = \tilde S_j}^{t'} \mathbf{1}\set{c(t) < m} \sfd t =  p'_j/\alpha$
		\STATE \hspace{\algorithmicindent} schedule $j$ preemptively at time points $\set{t \in \left(\tilde S_j, \tilde C_j\right]: c(t)  < m}$
		\STATE \hspace{\algorithmicindent} let $c(t) \gets c(t) + 1$ for every $t \in \left(\tilde S_j, \tilde C_j\right]$ with $c(t) < m$
	\end{algorithmic}
\end{algorithm}

Our algorithm for the precedence constrained scheduling problem is simple: We schedule the jobs in the increasing order of $C_j$  values respecting the precedence and the capacity constraints. Below, we give a formal description the algorithm. Let the speed augmentation factor  be $\alpha \geq 1$.
In our algorithm, $\tilde S_j$ will be the time when $j$ is ready for scheduling: the earliest time when $j$ has arrived and all predecessors of $j$ have completed. $\tilde C_j$ is the completion time of $j$.

\begin{lemma} Suppose the vector $C$ satisfies condition in Lemma (\ref{lem:property}). Then the algorithm above  with $\alpha = 3$ constructs a schedule with $\tilde C_j \leq C_j$ for every $j \in J$. 
\end{lemma}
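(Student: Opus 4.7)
\medskip
\noindent\textbf{Proof proposal.} The plan is to prove the claim by induction on the order in which the algorithm inserts jobs, showing $\tilde C_j \leq C_j$ for every $j$. The base case (the first job inserted) is easy: we have $\tilde S_j = r_j$ and $\tilde C_j = r_j + p'_j/\alpha$, while the LP cannot process $p'_j$ units of $j$ before $r_j + p'_j$, so $C_j \geq r_j + p'_j \geq \tilde C_j$ whenever $\alpha \geq 1$. For the inductive step, fix a job $j$ and assume $\tilde C_{j'} \leq C_{j'}$ for every $j'$ processed before $j$.

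\medskip
\noindent\textbf{Building the blocking chain.} Set $j_0 := j$. Whenever $\tilde S_{j_i} > r_{j_i}$ there is a predecessor $j' \prec j_i$ with $\tilde C_{j'} = \tilde S_{j_i}$; pick one and call it $j_{i+1}$. Iterate until $\tilde S_{j_k} = r_{j_k}$ and set $a := r_{j_k}$. Because $r_{j_{i+1}} < r_{j_i}$ along $\prec$, all chain jobs satisfy $r_{j_i} \geq a$, and the segments $(\tilde S_{j_i}, \tilde C_{j_i}]$ partition the interval $(a, \tilde C_j]$. Hence $\tilde C_j - a = \sum_{i=0}^{k}(\tilde C_{j_i} - \tilde S_{j_i}) = \sum_{i=0}^{k} p'_{j_i}/\alpha + F$, where $F$ is the total ``full'' time in $(a, \tilde C_j]$.

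\medskip
\noindent\textbf{The key structural claim.} At every $t \in (a, \tilde C_j]$, either all $m$ machines are busy at time $t$ just before $j$ is inserted, or some chain job $j_i$ is being processed at $t$. Indeed $t$ lies in some $(\tilde S_{j_i}, \tilde C_{j_i}]$: if $c_i(t) < m$ at the moment $j_i$ was inserted then $j_i$ occupies a machine at $t$; otherwise $c_i(t) = m$ and, since subsequent insertions can only increase the machine count (capped at $m$), the time $t$ remains full. Consequently the non-full time is bounded by $\sum_{i=0}^{k} p'_{j_i}/\alpha$, so $F = (\tilde C_j - a) - (\text{non-full time}) \geq 0$ and all non-full time is accounted for by chain processing.

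\medskip
\noindent\textbf{Invoking Lemma 2.5 on $[a, C_j]$.} Every job $j'$ counted among the busy machines during the full time (as well as the chain jobs themselves) was inserted before $j$, so $C_{j'} \leq C_j$. For chain jobs we already have $r_{j_i} \geq a$, so $\min\{p'_{j_i}, r_{j_i}+p'_{j_i}-a\} = p'_{j_i}$. For a non-chain job $j'$ active in $[a,\tilde C_j]$, if $r_{j'} < a$ then replacing $j_k$ by $j'$ (and resetting $a$) would extend the chain further back; by choosing $a$ as the minimum release time over all such jobs, we may assume $r_{j'} \geq a$ and again $\min\{p'_{j'}, r_{j'}+p'_{j'}-a\} = p'_{j'}$. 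Applying Lemma 2.5 with $b = C_j$ therefore bounds the total $p'$-mass of all pre-$j$ jobs active in $[a,\tilde C_j]$, together with the chain, by $2m(C_j - a)$ (with the capacity-$m$ reading of the lemma). The algorithm's total machine-time in $[a, \tilde C_j]$ is $\alpha$ times this mass divided by $\alpha$, and is at least $mF$, so $F \leq \frac{2}{\alpha}(C_j - a)$; likewise the chain contribution satisfies $\sum_i p'_{j_i}/\alpha \leq \frac{1}{\alpha}(C_j - a)$ using the sequential nature of the chain (chain jobs occupy at most one machine at a time, and the LP must serialize them by precedence, giving $\sum_i p'_{j_i} \leq C_j - a$).

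\medskip
\noindent\textbf{Putting things together.} Combining these two bounds gives
\[
\tilde C_j - a \;=\; F + \sum_{i=0}^k p'_{j_i}/\alpha \;\leq\; \tfrac{2}{\alpha}(C_j-a) + \tfrac{1}{\alpha}(C_j - a) \;=\; \tfrac{3}{\alpha}(C_j - a),
\]
which is $\leq C_j - a$ once $\alpha \geq 3$, closing the induction. The main obstacle will be step four: getting the chain bound $\sum_i p'_{j_i} \leq C_j - a$ cleanly (which uses the serial precedence structure rather than the full $m$-capacity bound of Lemma 2.5) and arguing that non-chain jobs active in $[a,\tilde C_j]$ can be assumed to satisfy $r_{j'} \geq a$ via the correct choice of $a$; without this, one would otherwise lose a factor of $m$ in the chain term and fail to achieve $\alpha = 3$.
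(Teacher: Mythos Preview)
Your blocking-chain approach is a natural first attempt and mirrors the classical Graham analysis, but it has two genuine gaps that you yourself flag in the last paragraph and do not actually resolve.

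\medskip
\textbf{Gap 1: the chain bound $\sum_i p'_{j_i} \le C_j - a$.} You justify this by saying ``the LP must serialize them by precedence.'' But the lemma's hypothesis is \emph{only} the volume inequality of Lemma~\ref{lem:property}; nothing else about the LP is assumed. Even if you go back to the LP, the precedence constraints~\eqref{eqn:Prec} are on the auxiliary variables $c_j$, not on the half-completion times $C_j$ that the algorithm sorts by; there is no reason the $C_j$'s along a precedence chain telescope as you need. Applying Lemma~\ref{lem:property} to the chain jobs themselves only gives $\sum_i p'_{j_i}\le 2m(C_j-a)$, which loses the factor $m$ you were trying to avoid.

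\medskip
\textbf{Gap 2: non-chain jobs with $r_{j'} < a$.} Your fix ``choose $a$ as the minimum release time over all such jobs'' breaks the argument: the chain covers exactly $(r_{j_k},\tilde C_j]$, so moving $a$ strictly below $r_{j_k}$ creates an interval $(a,r_{j_k}]$ in which non-full time is no longer charged to any chain job, and the decomposition $\tilde C_j-a \le F+\sum_i p'_{j_i}/\alpha$ fails. Iterating the choice of $a$ does not terminate with both properties intact.

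\medskip
The paper sidesteps both issues with a different device: it defines the potential $f(t)=\min\{t,\min_{j\in R_t}(r_j+p_j(t))\}$ and takes $t$ to be the last \emph{fresh point} (where $f(t)=t$) before $\tilde C_{j^*}$. The monotonicity of $f$ and the fact that $f$ grows at rate $\alpha$ during idle time give directly that idle time in $(t,\tilde C_{j^*}]$ is at most $(\tilde C_{j^*}-t)/\alpha$ --- this replaces your chain bound and needs no serialization claim. Moreover, at a fresh point every ready job $j'$ satisfies $r_{j'}+p_{j'}(t)\ge t$, so its \emph{remaining} work after $t$ is at most $\min\{p'_{j'},\,r_{j'}+p'_{j'}-t\}$, exactly the summand in Lemma~\ref{lem:property}; this is what makes the busy-time bound go through and is precisely the property your choice $a=r_{j_k}$ fails to deliver. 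The final inequality then reads $(\tilde C_{j^*}-t)\le(\tilde C_{j^*}-t)/3+2(C_{j^*}-t)/3$, giving $\tilde C_{j^*}\le C_{j^*}$.
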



Till the end of this section, we fix an arbitrary $j^* \in J$. We shall show that $\tilde C_{j^*} \leq C_{j^*}$. Let $\mathsf{SOL}$ be the schedule constructed by the algorithm after the iteration $j^*$; let $J'$ be the set of jobs scheduled in $\mathsf{SOL}$; i.e, the set of jobs considered before $j^*$ (including $j^*$).   We say a time point $t$ is {\em busy} if all machines are processing jobs at time $t$ in the schedule $\mathsf{SOL}$; otherwise $t$ is {\em idle}.

For every $j \in J'$, let $ p_j(t)$ be the amount of job $j$ that has been processed by time $t$ in $\mathsf{SOL}$. 
Let $R_t = \set{j \in J': t \in (\tilde S_j, \tilde C_j]}$ be the set of unfinished jobs at time $t$ that are ready for scheduling, i.e, the set of arrived and incomplete jobs whose predecessors are completed.

\medskip

For every $t\in [0, \infty)$, let $f(t) = \min\set{t, \min_{j \in R_t}( r_j +  p_j(t))}$.  
\begin{claim}
	$f$ is a non-decreasing function.
\end{claim}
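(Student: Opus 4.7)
The plan is to verify monotonicity of $f$ one ``event'' at a time, where an event is a moment at which the membership of $R_t$ changes. Between events the claim is immediate: each summand $r_j + p_j(t)$ for $j \in R_t$ is non-decreasing because $p_j(t)$ is non-decreasing in $t$, and the standalone term $t$ is obviously non-decreasing as well, so the minimum of all these terms is non-decreasing. Events come in two flavours. When a job $k$ leaves $R_t$ at $t = \tilde C_k^+$, a term simply drops out of the defining min, which can only increase (or preserve) $f$. The substantive case is when a new job $k$ enters $R_t$ at $t = \tilde S_k^+$; then the term $r_k + p_k(\tilde S_k) = r_k$ is inserted into the min, and to ensure $f$ does not drop I would need to show $r_k \geq f(\tilde S_k)$.

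This inequality will be handled by a case analysis on which argument realises $\tilde S_k = \max\{r_k,\ \max_{k' \prec k}\tilde C_{k'}\}$. If the outer max is attained by $r_k$ itself, then $r_k = \tilde S_k$ and the bound $r_k \geq f(\tilde S_k)$ is immediate from the trivial inequality $f(t) \leq t$. Otherwise the max is attained by $\tilde C_{k'}$ for some predecessor $k' \prec k$, i.e., $k'$ completes exactly at $\tilde S_k$. Since $\tilde C_{k'} \in (\tilde S_{k'}, \tilde C_{k'}]$, the predecessor $k'$ is still in $R_{\tilde S_k}$ at that instant and contributes $r_{k'} + p_{k'}(\tilde S_k) = r_{k'} + p'_{k'}$ to the min, so $f(\tilde S_k) \leq r_{k'} + p'_{k'}$. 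It then suffices to establish $r_k \geq r_{k'} + p'_{k'}$, which is where the WLOG strict release-time ordering $r_{k'} < r_k$ along precedence edges, combined with the paper's reduction to unit-size jobs along precedence chains (so each $p'_{k'} = 1/2$ and release times along a chain are separated by at least one), gives exactly the required slack.

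The main obstacle I expect is this last bound $r_k \geq r_{k'} + p'_{k'}$: the bare strict inequality $r_{k'} < r_k$ by itself is not enough to rule out $r_k \in (r_{k'}, r_{k'}+p'_{k'})$, so the argument genuinely needs the unit-size chain reduction (or an equivalent integral-release-time WLOG) to turn the strict inequality into a gap of size at least $p'_{k'}$. Once that ingredient is in place, chaining the ``between-events'' monotonicity with the removal and insertion analyses at event times closes the proof.
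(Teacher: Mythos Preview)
Your approach is essentially the paper's: both argue that each term $r_j+p_j(t)$ is non-decreasing, that removals from $R_t$ can only raise the minimum, and that an insertion at $\tilde S_k$ is harmless via the same two-case split on whether $\tilde S_k=r_k$ or $\tilde S_k=\tilde C_{k'}$ for some $k'\prec k$.

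The only point worth noting is your handling of the final inequality $r_k\ge r_{k'}+p'_{k'}$. The paper dispatches this by appealing to ``preprocessing of the instance'' to get $r_{k}\ge r_{k'}+p_{k'}$ directly. The intended WLOG is simply to replace each release time by $r_{j'}\gets\max\bigl\{r_{j'},\ \max_{j\prec j'}(r_j+p_j)\bigr\}$, which is harmless since $j'$ cannot start before $r_j+p_j$ in any unit-speed schedule; this immediately yields $r_k\ge r_{k'}+p_{k'}\ge r_{k'}+p'_{k'}$. Your route through the unit-size-chain reduction plus an integral-release-time assumption would also work, but it is a detour and, as you note, needs the extra integrality hypothesis to turn $r_{k'}<r_k$ into a gap of the right size. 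The direct release-time preprocessing is cleaner and avoids that issue.
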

\begin{proof}
	We consider the change of $\min\set{t, \min_{j \in R_t}( r_j +  p_j(t))}$ as $t$ increases.
	The term $ r_j +  p_j(t)$ can only increase as $t$ increases. Jobs maybe added to and removed from $R_t$. Removing jobs from $R_t$ can only increase the minimum. So, it suffices to consider adding jobs to $R_t$. If some job $j'$ has $\tilde S_{j'} = t$, then $j' \notin A_t$ but $j' \in A_{t'}$ for $t' \in (t, t+1]$. We have either $t =  r_{j'}$ or $t = \tilde C_{j''}$ for some $j'' \prec j'$; notice that $j'' \in R_t$ in the latter case.  In the former case, $ r_{j'} +  p_{j'} (t) = t$; in the latter case, $ r_{j'} +  p_{j'}(t) =  r_{j'} \geq r_{j''} +  p_{j''} = r_{j''} +  p_{j''}(t)$, where the inequality is guaranteed by the preprocessing of the instance.  So, we always have $\min\set{t, \min_{j \in R_t \cup \{j'\}}( r_j +  p_j(t))} = \min\set{t, \min_{j \in R_t}( r_j +  p_j(t))}$, i.e, adding $j'$ to $R_t$ does not change the quantity.
\end{proof}

\begin{definition}
	For any $t \geq 0$, we say $t$ is a fresh point if $f(t) = t$.
\end{definition}

\begin{lemma}
	\label{lem:volume}
	Let $a$ be a fresh point. Let $b > a$ be an integer such that there are no fresh points in $(a, b)$.  The total length of idle slots in $(a, b)$ is at most $(b-a)/\alpha$. 
\end{lemma}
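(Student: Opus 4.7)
The plan is to track the growth of the function $f$ across $(a,b)$. Since $a$ is fresh we have $f(a)=a$, and $f(b)\le b$ by definition, so $f(b)-f(a)\le b-a$. I will show that on the set $U\subseteq(a,b)$ of idle slots $f$ grows at rate $\alpha$, which combined with the non-decreasing property of $f$ yields $\alpha L\le b-a$ and hence $L\le (b-a)/\alpha$, where $L$ is the total length of $U$.

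The key first step is to show that at every idle slot $t\in U$, every job in $R_t$ is processed at $t$. Since the counter $c(\cdot)$ only increases during the run of the algorithm, if $c(t)<m$ at the end then $c(t)<m$ at the moment each $j\in R_t$ was scheduled; because $t\in(\tilde S_j,\tilde C_j]$ and $c(t)<m$ at that moment, the algorithm placed $j$ at $t$. Consequently $c(t)=|R_t|$, and in particular any argmin $w(t)$ of $r_j+p_j(t)$ over $j\in R_t$ is being processed at $t$, so $\tfrac{\sfd p_{w(t)}}{\sfd t}(t)=\alpha$ by the speed-augmentation convention.

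Because $(a,b)$ contains no fresh points, $R_t\ne\emptyset$ and $f(t)=r_{w(t)}+p_{w(t)}(t)<t$ throughout. Between discontinuities of $w(\cdot)$ — which can only occur when the current argmin $w$ leaves $R_t$ by completing, and are upward since any replacement job $j$ satisfies $r_j+p_j(t)\ge r_w+p_w(t)$ — $f$ is absolutely continuous with $\tfrac{\sfd f}{\sfd t}(t)=\tfrac{\sfd p_{w(t)}}{\sfd t}(t)$, which equals $\alpha$ on $U$ by the preceding paragraph. Using that $f$ is non-decreasing (so all jumps contribute non-negatively to $f(b)-f(a)$),
\begin{align*}
b-a \;\ge\; f(b)-f(a) \;\ge\; \int_{(a,b)}\tfrac{\sfd f}{\sfd t}(t)\,\sfd t \;\ge\; \int_{U}\alpha\,\sfd t \;=\; \alpha L,
\end{align*}
so $L\le (b-a)/\alpha$, as claimed.

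The only delicate step is the opening one — that every job of $R_t$ is in fact scheduled at an idle slot $t$, which is what guarantees the argmin receives full speed-$\alpha$ processing. Everything after that is a monotonicity-and-continuity argument on $f$ and does not invoke Lemma~\ref{lem:property}.
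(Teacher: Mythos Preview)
Your proof is correct and follows the same approach as the paper: both show that at every idle slot each job in $R_t$ (hence the argmin) is being processed, so $f$ grows at rate $\alpha$ there, and then combine this with the monotonicity of $f$ and $f(b)-f(a)\le b-a$ to bound the idle length. One inessential inaccuracy: your parenthetical that $w(\cdot)$ changes only when the current argmin completes is not literally true (it can also switch during busy stretches or when a job enters $R_t$), but this does no harm since the only fact you actually use is that $f$ is non-decreasing.
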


\begin{proof}
	Since there are no fresh points in $(a, b)$, we have that $f(t) = \min_{j \in R_t}\{r_j+ p_j(t)\}$ for every $t \in (a, b)$. If $t \in (a, b)$ is idle, then all the jobs in $R_t$ must be processed at time $t$: a job $j \in R_t$ is not being processed at time $t$ will contradict the fact that $t$ is an idle slot.  Let $(a', b') \subseteq (a, b)$ be a subset of idle slots such that $R_t$ is the same for all $t \in (a', b')$. Then $f(t)$ will increase at a rate of $\alpha$ within the interval $(a', b')$. Since $R_t$ will be changed only finite number of times as $t$ goes from $a$ to $b$, we have that $f(b) - f(a)$ is at least $\alpha$ times the total length of idle slots in $(a, b]$. Since $f(b) - f(a) = f(b) - a \leq b - a$, we have that the total length of idle slots in $(a, b]$ is at most $(b-a)/\alpha$.
\end{proof}

\begin{lemma}
    \label{lem:completion}
	We have $\tilde C_{j^*} \leq C_{j^*}$.
\end{lemma}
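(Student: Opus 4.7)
The plan is to focus on the interval $(a, \tilde C_{j^*}]$, where $a$ is the largest fresh point with $a \leq \tilde C_{j^*}$, and squeeze the total work processed by the algorithm in this interval between a lower bound coming from Lemma~\ref{lem:volume} and an upper bound coming from Lemma~\ref{lem:property}. Applying Lemma~\ref{lem:volume} with $b=\tilde C_{j^*}$ (so no fresh point lies strictly between $a$ and $\tilde C_{j^*}$), the total idle time in $(a, \tilde C_{j^*})$ is at most $(\tilde C_{j^*} - a)/\alpha$, so at $\alpha$-speed the work processed is at least $(\alpha - 1)(\tilde C_{j^*} - a)$ per machine. Lemma~\ref{lem:property} applied with $b = C_{j^*}$ will, in the main case $a \leq C_{j^*}$, bound the same work from above by $2(C_{j^*} - a)$. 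Combining these with $\alpha = 3$ then yields $\tilde C_{j^*} \leq C_{j^*}$.

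The key technical step is matching the jobs that receive processing in $(a, \tilde C_{j^*}]$ to the Lemma~\ref{lem:property} sum. I plan to show that every such job $j$ satisfies (i) $C_j \leq C_{j^*}$, (ii) $r_j + p'_j > a$, and (iii) the work on $j$ within $(a, \tilde C_{j^*}]$ is at most $\min\{p'_j,\, r_j + p'_j - a\}$. Property (i) is immediate from the list-scheduling order, since $j$ is inserted before $j^*$. For (ii), whenever $j \in R_t$ for some $t \in (a, \tilde C_{j^*}]$, the monotonicity of $f$ gives $r_j + p_j(t) \geq f(t) \geq f(a) = a$; since $j$ is incomplete at $t$, $p_j(t) < p'_j$, hence $r_j + p'_j > a$.

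For (iii), I split on $\tilde S_j$. If $\tilde S_j < a$ (so $j \in R_a$), freshness at $a$ forces $p_j(a) \geq a - r_j$, and hence the remaining work in $(a, \tilde C_{j^*}]$ is at most $p'_j - (a - r_j) = r_j + p'_j - a$. If $\tilde S_j \geq a$, then the very argument used to prove $f$ non-decreasing now forces $r_j \geq a$: were $r_j < a$, the quantity $r_j + p_j(\tilde S_j^+) = r_j < a$ would appear in the minimum defining $f$ just after $\tilde S_j$, dropping $f$ strictly below $a$ and contradicting $f(t) \geq a$ for every $t > a$. With $r_j \geq a$ we have $\min\{p'_j,\, r_j + p'_j - a\} = p'_j$, and the work on $j$ is bounded by $p'_j$ trivially.

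The main obstacle I anticipate is ruling out the corner case $a > C_{j^*}$, which the work-squeezing argument cannot reach directly since Lemma~\ref{lem:property} needs $a \leq b$. I plan a two-pronged dispatch. If $\tilde S_{j^*} < a$, then $j^* \in R_a$ and freshness gives $p_{j^*}(a) \geq a - r_{j^*} > C_{j^*} - r_{j^*} \geq p'_{j^*}$, using the LP bound $C_{j^*} \geq r_{j^*} + p'_{j^*}$ forced by $x_{j^*,t} \leq 1$; this contradicts $p_{j^*}(a) \leq p'_{j^*}$. If $\tilde S_{j^*} \geq a$, an induction on the list order gives $\tilde C_{j'} \leq C_{j'} \leq C_{j^*}$ for the predecessor $j'$ realizing $\tilde S_{j^*}$, so $\tilde S_{j^*} = \tilde C_{j'} \leq C_{j^*} < a$, contradicting $\tilde S_{j^*} \geq a$. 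The degenerate case $a = \tilde C_{j^*}$ is handled directly: freshness at $\tilde C_{j^*}$ applied to $j^*$ gives $r_{j^*} + p'_{j^*} \geq \tilde C_{j^*}$, and the LP bound $C_{j^*} \geq r_{j^*} + p'_{j^*}$ closes the case.
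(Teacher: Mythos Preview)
Your proposal is correct and follows essentially the same strategy as the paper's proof: take the last fresh point $a \leq \tilde C_{j^*}$, bound the idle portion of $(a,\tilde C_{j^*}]$ via Lemma~\ref{lem:volume}, bound the busy portion via Lemma~\ref{lem:property} applied with $b = C_{j^*}$, and combine at $\alpha = 3$ to obtain $\tilde C_{j^*} \leq C_{j^*}$. You are in fact more careful than the paper, which asserts the busy-time bound without verifying your conditions (i)--(iii) or addressing the corner case $a > C_{j^*}$; your treatment of these details is sound (the sub-case $\tilde S_{j^*} = r_{j^*}$ under $\tilde S_{j^*} \geq a > C_{j^*}$ is dispatched by the same LP bound $C_{j^*} \geq r_{j^*} + p'_{j^*} > r_{j^*}$).
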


\begin{proof}
	Let $t$ be the last fresh point before $\tilde C_{j^*}$.  By Lemma (\ref{lem:volume}), the total length of idle slots in $(t, \tilde C_{j^*}]$ is at most $(\tilde C_{j^*} - t)/\alpha$.   The total length of busy slots is at most 
	\begin{align*}
		\frac{1}{m} \cdot \left (\sum_{j: j_k = j, C_j \leq C_{j^*},  r_j +  p_j > t} \min\set{ p_j,  r_j +  p_j - t} \right) \leq  2  \cdot(C_{j^*} - t)/\alpha.
	\end{align*}
	
	 When $\alpha = 3$ we get $(\tilde C_{j^*} - t) \leq (\tilde C_{j^*} - t)/3 + 2 (C_{j^*} - t)/3$.  This implies $\tilde C_{j^*} \leq C_{j^*}$, which completes the proof.
\end{proof}

The above lemma and Claim (\ref{claim:truncate}) imply the following.

\begin{lemma}
\label{lem:migratoryschedule}
There is a migratory 6-speed $O(1)$-approximation algorithm for the \pc problem.
\end{lemma}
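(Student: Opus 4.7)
The plan is to string together three elements that have already been developed in this subsection: the $2$-approximation property of the LP relaxation, the surrogate completion-time vector $(C_j)_j$ defined via the half-processing trick, and the list-scheduling guarantee $\tilde C_j \leq C_j$.

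First I would verify that the LP optimum is within a factor of $2$ of $\opt$. The derivation immediately following the LP (\ref{lp:Precobj})--(\ref{e:PrecNN}) already shows that $\sum_j \sum_{t \geq r_j} x_{jt}\, g_j(t)/p_j \leq \sum_j g_j(C^*_j)$ for the variables induced by any feasible integral schedule with completion times $(C^*_j)_j$. The second term $\sum_j g_j(p_j)$ is also a valid lower bound for the same schedule because $C^*_j \geq p_j$ (at most one unit per time slot is processed for each job) and $g_j$ is non-decreasing. Therefore the LP value is at most $2\cdot\opt$.

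Next I would pay for the cost of the constructed schedule using the LP. Claim~\ref{claim:truncate} gives that the contribution of job $j$ to the LP objective is at least $g_j(C_j)/2$, hence $\sum_j g_j(C_j) \leq 2\cdot \mathrm{LP} \leq 4\cdot\opt$. Lemma~\ref{lem:property} shows that the vector $(C_j)_j$ together with the halved sizes $p'_j = p_j/2$ satisfies the interval-volume condition required by the list-scheduling analysis, and Lemma~\ref{lem:completion} then yields $\tilde C_j \leq C_j$ for every $j$ when $\alpha = 3$. Since the $g_j$ are non-decreasing, the schedule produced by the algorithm has cost $\sum_j g_j(\tilde C_j) \leq \sum_j g_j(C_j) \leq 4\cdot\opt$.

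Finally I would promote the preemptive schedule, which only processes $p'_j = p_j/2$ units of each job, into one that processes the full $p_j$ units. I would simply double the processing rate in every slot the algorithm uses; this adds a multiplicative factor of $2$ to the speed-up without changing any $\tilde C_j$. The composite speed augmentation is $3 \cdot 2 = 6$ and the approximation ratio is the absolute constant $4$, establishing the claimed migratory $6$-speed $O(1)$-approximation. The one place that requires care---not really an obstacle---is the bookkeeping: Lemma~\ref{lem:property} has to be invoked on the halved sizes $p'_j$ (its right-hand constant of $2$ relies on summing $\sum_{j \in J_1} p'_j$ and $\sum_{j \in J_2}(r_j + p'_j - a)$ separately), and the rate-doubling step must be charged against the speed-up factor rather than counted twice in the approximation ratio.
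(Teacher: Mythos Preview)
Your proposal is correct and follows essentially the same route as the paper: use Claim~\ref{claim:truncate} together with the LP being a $2$-approximation to bound $\sum_j g_j(C_j)$, invoke Lemma~\ref{lem:completion} (via Lemma~\ref{lem:property}) at $\alpha=3$ to get $\tilde C_j\le C_j$, and then double the speed to restore the full $p_j$ units, for a $6$-speed $O(1)$-approximation. Your write-up is slightly more explicit about the constant (you get $4$), but the argument is the same.
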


\begin{proof}
Note that our algorithm when given a speed augmentation of 3 produces a migratory schedule where the completion time of every job is at most $C_j$. Therefore, from Claim (\ref{claim:truncate}), the cost incurred by our algorithm is at most twice the cost of LP solution. However, in our schedule we only process $p'_j$ units of a job $j$. To convert this into a valid schedule, we need another round of speed augmentation of factor 2. Therefore, the total speed augmentation required is 6. This completes the proof.
\end{proof}

\subsection{Non-Migratory Schedule}
The schedule produced by our algorithm processes each job in the interval $[S_j, \tilde C_j]$, but a single job may be processed on multiple machines. 
To convert this migratory schedule into non-migratory schedule, we use the algorithm either in \cite{kalyanasundaram2001eliminating, ImM16} or in  \cite{Kulkarni17}. The result in  \cite{Kulkarni17} is more suitable for our purposes.
 
\begin{theorem}[\cite{Kulkarni17,ImM16}]
\label{thm:non-mig}
Given a set of jobs where each job $j$ has an associated interval $[s_j, d_j]$, if there is a migratory schedule on $m$ machines that completely processes the jobs in their intervals, then there is a non-migratory schedule that schedules all the jobs within their intervals  that processes each job exactly on one machine if each machine is  given an $O(1)$ speed augmentation.
\end{theorem}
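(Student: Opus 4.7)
The plan is to convert the given migratory schedule into a non-migratory one via a two-stage argument: first round the fractional machine-assignment induced by the migratory schedule, and then re-schedule on each machine using \edf with a small speed-up.

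First I would extract from the migratory schedule, for every job $j$ and machine $i$, the quantity $y_{ij}$ equal to the fraction of $p_j$ processed on machine $i$ in the interval $[s_j,d_j]$. These satisfy $\sum_i y_{ij} = 1$ for all $j$, and the per-machine load constraint $\sum_j y_{ij}\,p_j\,\mathbf{1}[(s_j,d_j)\cap I\ne\emptyset] \le |I|$ for every interval $I$. View $(y_{ij})$ as a fractional bipartite assignment of jobs to machines. Using a flow/matroid-based rounding (for instance Shmoys--Tardos style, or the iterative rounding of the matching polytope), produce integer $x_{ij}\in\{0,1\}$ with $\sum_i x_{ij}=1$ and such that the machine loads are preserved up to an additive $O(\max_j p_j)$ error in every time window. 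The standard way to guarantee the latter is to impose an interval structure on the rounding constraints, grouping jobs by their $[s_j,d_j]$ windows and rounding within each group.

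Next, I would fix a machine $i$ and consider the set $J_i=\{j:x_{ij}=1\}$ of jobs assigned to it, each with its window $[s_j,d_j]$. The rounding step guarantees the key invariant: for every interval $[a,b]$,
\[
\sum_{j\in J_i:\,[s_j,d_j]\subseteq[a,b]} p_j \le c\,(b-a)
\]
for some constant $c$. By a classical characterization of feasibility for single-machine preemptive scheduling with release times and deadlines (essentially the single-machine analogue of \lemref{edfch}), this invariant implies that a machine of speed $c$ running \edf completes every $j\in J_i$ within $[s_j,d_j]$. Thus $O(1)$-speed augmentation suffices on each machine.

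The main obstacle will be controlling both objectives simultaneously in the rounding step: producing an integral assignment in which, on every machine, every time interval is underloaded by only a constant factor. A direct LP rounding will typically give a one-sided bound; the trick (used in~\cite{kalyanasundaram2001eliminating,ImM16,Kulkarni17}) is to build the bipartite rounding instance using the \emph{interval laminar structure} of the job windows, so that the Shmoys--Tardos rounding or a Gr\"obner/flow argument preserves all interval load constraints up to $O(\max_j p_j)$ slack. Absorbing this slack into the speed-up factor yields the stated $O(1)$-speed guarantee, completing the conversion.
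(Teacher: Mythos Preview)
The paper does not prove this theorem: it is quoted from \cite{Kulkarni17,ImM16} (see also \cite{kalyanasundaram2001eliminating}) and used as a black box in the proof of Theorem~\ref{thm:PCUB}. There is therefore no ``paper's own proof'' to compare your proposal against.

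On the merits of your sketch: the high-level two-stage plan (round the fractional machine assignment induced by the migratory schedule, then run \edf per machine with constant speed-up) is indeed the shape of the arguments in the cited works, but as written it has real gaps. First, the load constraint you extract from the migratory schedule is not correct: a migratory schedule only guarantees that the work done on machine $i$ \emph{during} an interval $I$ is at most $|I|$, not that $\sum_j y_{ij}p_j\,\mathbf{1}[(s_j,d_j)\cap I\neq\emptyset]\le |I|$; a job whose window straddles $I$ can contribute $y_{ij}p_j$ far exceeding its processing inside $I$. Second, the rounding step is where all the difficulty lies, and ``Shmoys--Tardos style rounding over the interval laminar structure'' is a label, not an argument: the job windows $[s_j,d_j]$ are arbitrary intervals, not a laminar family, so you would need to explain how you obtain a laminar or tree structure (e.g.\ via a recursive time decomposition) and why the resulting constraint matrix allows integral rounding with only $O(1)$ multiplicative loss on \emph{every} interval simultaneously. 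The additive $O(\max_j p_j)$ slack you mention cannot simply be ``absorbed into the speed-up factor'' unless you also control $\max_j p_j$ relative to the interval lengths, which is not given. If you want to actually prove the theorem, you should consult the cited references; the argument is nontrivial and is exactly why the present paper invokes it as a black box.
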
 

Using the above theorem, we complete the proof of Theorem (\ref{thm:PCUB}).

\begin{proof}[Proof of Theorem (\ref{thm:PCUB})]
Let $S$ be the migratory scheduled produced by Lemma (\ref{lem:migratoryschedule}). We invoke the Theorem (\ref{thm:non-mig}) on $S$, where we set the interval for each job $j$ to be $[S_j, \tilde C_j]$.  Therefore, with an overall speed augmentation of $O(1)$, we get a schedule that completely processes each job in the interval $[S_j, \tilde C_j]$ on a single machine.  Further, the cost of the schedule  is at most twice the cost of the LP solution. Combining all the pieces, we complete the proof of Theorem (\ref{thm:PCUB}) regarding the upper bound. We show the lowerbound on the speed augmentation required in the next section.
\end{proof}

\subsection{Necessity of Speed-Augmentation and Proof of Theorem (\ref{thm:PCLB})}
Here we show that our analysis of \pc is essentially tight, and in particular $O(1)$ speed augmentation is necessary.  First we show the necessity of speed augmentation to the case of single machine. Our reduction is from the Densest-k-Subgraph (\dks) problem. In the \dks problem, we are given a graph $G$ with $n$ vertices and $m$ edges, and integers $k > 0, L >0$. The goal is to find a subgraph of $G$ on $k$ vertices that contains at least $L$ edges.
Recently, Manurangsi \cite{Manurangsi17} showed the following result.

\begin{theorem}
\label{thm:manu}
There exists a constant $c > 0$ such that, assuming exponential time hypothesis, no polynomial time algorithm can distinguish between the following two cases:
\begin{itemize}
\item There exists a set of $k$ vertices of $G$ that contains an induced subgraph with $L$ edges.
\item Every subset of $k$ vertices of $G$ contains at most $L/n^{1/(\log \log n)^c}$ induced edges.
\end{itemize}
\end{theorem}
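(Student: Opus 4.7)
The plan is to prove Manurangsi's Densest-$k$-Subgraph hardness via a \emph{birthday repetition} reduction starting from Label Cover, where the hardness of Label Cover is derived from ETH for 3SAT via the PCP theorem with quasi-polynomial blow-up. The goal is to engineer a graph $H$ on $N$ vertices such that completeness gives a $k$-clique (density $1$) while soundness forces every $k$-subset to have density at most $1/N^{1/(\log\log N)^c}$, where $N = \mathrm{poly}(n)$ and the exotic gap factor will come from carefully balancing the Label Cover alphabet size against the birthday subset size.

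First, I would start with an ETH-hard 3SAT instance on $n$ variables and apply a gap amplification to obtain a Label Cover instance $\mathcal L = (U, V, E, \Sigma_U, \Sigma_V, \{\pi_e\})$ with $|U| + |V| = \mathrm{poly}(n)$, alphabet $|\Sigma_U|, |\Sigma_V| \leq 2^{\mathrm{polylog}(n)}$, perfect completeness, and soundness $s = 1/|\Sigma_V|^{\Theta(1)}$. Under ETH this gap cannot be distinguished in time $2^{n^{1-o(1)}}$. Next I would perform birthday repetition: fix a parameter $t$ (roughly $t \approx \sqrt{|V|/k}$, tuned with care), and let $T_1, \ldots, T_k$ be $k$ subsets of $V$ of size $t$ chosen pseudorandomly so that, for a typical pair $(T_i, T_j)$, the expected number of $U$-vertices adjacent to both $T_i$ and $T_j$ in $\mathcal L$ is $\Theta(1)$ (this is the ``birthday paradox'' regime).

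The reduction constructs $H$ as follows. Vertices of $H$ are pairs $(T_i, \sigma)$ where $\sigma: T_i \to \Sigma_V$ is a \emph{consistent local assignment}, meaning that for every $u \in U$ whose $\mathcal L$-neighborhood lies entirely in $T_i$, the projections $\pi_{(u,v)}(\sigma(v))$ for $v \in N(u) \cap T_i$ agree. Two vertices $(T_i, \sigma)$ and $(T_j, \tau)$ are adjacent in $H$ iff for every $u \in U$ with neighbors in both $T_i$ and $T_j$, the projections induced by $\sigma \cup \tau$ are still consistent. Completeness is easy: a perfectly satisfying assignment $\phi: V \to \Sigma_V$ yields the $k$ vertices $(T_i, \phi\!\restriction_{T_i})$, which form a clique of density $1$.

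The hard step is soundness, and this is where I expect the main obstacle. Given a supposed dense $k$-vertex subgraph $S \subseteq V(H)$ of density $\delta$, I would show that $S$ induces (in expectation over the random choice of the $T_i$'s) a global labeling of $\mathcal L$ satisfying a $\Omega(\delta)$ fraction of the constraints, contradicting soundness $s$. Concretely, for each $v \in V$, sample a random $(T_i, \sigma) \in S$ with $v \in T_i$ and assign $\sigma(v)$; a pair $(T_i, \sigma), (T_j, \tau)$ that shares the birthday-expected number of common $U$-constraints and is an edge in $H$ contributes directly to the fraction satisfied. The delicate part is handling concentration: one must argue that dense subgraphs cannot avoid the typical intersection pattern, which requires controlling the contribution of ``exceptional'' pairs via a sunflower/union-bound argument on $\Sigma_V$. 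Iterating this birthday construction (or using a two-level version with subsets of subsets) and pushing $t$ close to $|V|^{1-o(1)}$ is what converts the polynomial gap into the almost-polynomial $N^{1/(\log\log N)^c}$ factor, and optimizing the interplay between the ETH budget $2^{n^{1-o(1)}}$, the alphabet size, and the subset size $t$ is where the constant $c$ is extracted.
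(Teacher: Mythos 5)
You are attempting to prove a statement that the paper itself does not prove: Theorem~\ref{thm:manu} is quoted verbatim from Manurangsi's STOC~2017 paper \cite{Manurangsi17} and is used as a black box, so there is no internal proof to compare against. Judged on its own terms, your sketch has a genuine gap exactly at the step you flag as ``the main obstacle.'' Birthday repetition is the technique of Braverman, Ko, Rubinstein and Weinstein, and what it is known to deliver for Densest-$k$-Subgraph with perfect completeness is only a $1$ versus $(1-\epsilon)$ density gap (under ETH, with a quasi-polynomial time lower bound): the decoding argument ``dense $k$-subgraph $\Rightarrow$ labeling satisfying an $\Omega(\delta)$ fraction of constraints'' breaks down precisely when the subgraph concentrates on few blocks $T_i$ with many inconsistent local assignments each, which is why that line of work never pushed the soundness density below a constant. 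Your proposal to handle this with ``a sunflower/union-bound argument on $\Sigma_V$'' and to extract the $N^{1/(\log\log N)^c}$ factor by ``iterating'' and ``optimizing the interplay'' is not an argument; it is exactly the missing new idea. Manurangsi's actual proof does \emph{not} go through birthday repetition at all -- his reduction from label cover is polynomial-size and the almost-polynomial gap comes from a new counting/decoding analysis in the soundness case -- so you cannot expect the gap to fall out of parameter tuning within the birthday framework.

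There are also concrete parameter inconsistencies in the setup. A label cover instance with soundness $1/|\Sigma_V|^{\Theta(1)}$ and alphabet $2^{\mathrm{polylog}(n)}$ obtained by parallel repetition has size $n^{\mathrm{polylog}(n)}$, not $\mathrm{poly}(n)$ as you assert; feeding a quasi-polynomial-size instance into birthday repetition with $t \approx \sqrt{|V|/k}$ (let alone $t \approx |V|^{1-o(1)}$) produces a graph of size $2^{t\cdot\mathrm{polylog}}$ that can exceed the $2^{o(n)}$ budget ETH affords, killing the reduction. Separately, to rule out density $N^{-1/(\log\log N)^{c}}$ via your decoding claim you would need label cover soundness below that quantity, hence alphabet size at least $N^{1/(\log\log N)^{c}}$, which is in direct tension with keeping $|\Sigma_V|^{t}$ subexponential while maintaining the birthday regime $t^2 \approx |V|$. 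None of these tensions is resolved in the sketch, so the proposal does not constitute a proof of the theorem; if you need this statement, cite \cite{Manurangsi17} as the paper does.
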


For our reduction, it will be more convenient to consider the following  {\em vertex version} of the \dks problem. In this problem, given $G, L$, the goal is to approximate the number of  vertices $S$ of $G$ that contains at least $L$ edges.  An $\alpha$-approximation to the vertex version of \dks problem gives an $1/2\alpha^2$-approximation to the \dks problem. To see this, assume that there is an $\alpha$ approximation to vertex version of \dks. Given a \dks instance, we assume we know the number $L$ of edges in the densest $k$-subgraph of $G$ (this can be achieved by guessing). Thus we can find a set of vertices of size $\alpha \cdot k$ that contains at least $L$ edges.  We then randomly choose $k$ out of the $\alpha k$ vertices. It is easy to see that in expectation, at least $Lk(k-1)/(\alpha k)(\alpha k-1)$ edges will be contained in the sub-graph induced by the $k$ chosen vertices. This is at least $L/2\alpha^2$ when $k \geq 2$. This process can be easily derandomized.

\begin{corollary}
\label{thm:manu2}
There exists a constant $c > 0$ such that, assuming exponential time hypothesis, no polynomial time algorithm can distinguish between the following two cases:
\begin{itemize}
\item There exists a set of $k$ vertices of $G$ that contains an induced subgraph with $L$ edges.
\item Every subset of size $k \cdot n^{1/(\log \log n)^c}/2$ vertices of $G$ contains at most $L$ induced edges.
\end{itemize}
\end{corollary}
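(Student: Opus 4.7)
The plan is to derive Corollary~\ref{thm:manu2} from Theorem~\ref{thm:manu} by a deterministic contrapositive reduction. Let $c_1$ be the constant promised by Manurangsi's Theorem~\ref{thm:manu}, and let the constant in the corollary be any fixed $c > c_1$ (to be chosen below). Given an instance $(G, k, L)$ of the Theorem~\ref{thm:manu} gap problem, I will hand the same triple $(G, k, L)$ to any purported polynomial-time algorithm $\mathsf{A}$ that distinguishes the two cases of Corollary~\ref{thm:manu2}, and argue that $\mathsf{A}$'s verdict also distinguishes the two cases of Theorem~\ref{thm:manu}, contradicting ETH.

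The YES side is immediate, because the YES case of Theorem~\ref{thm:manu} is, verbatim, the YES case of Corollary~\ref{thm:manu2}, so $\mathsf{A}$ must answer YES. For the NO side it suffices to prove the contrapositive: if some subset $S \subseteq V(G)$ with $|S| = k' := \alpha k$, where $\alpha := n^{1/(\log\log n)^c}/2$, induces strictly more than $L$ edges in $G$, then some $k$-subset of $V(G)$ induces strictly more than $L/n^{1/(\log\log n)^{c_1}}$ edges (violating the NO case of Theorem~\ref{thm:manu}). This step is exactly the averaging/sampling argument sketched in the paragraph preceding the corollary: a uniformly random $k$-subset of $S$ has expected induced-edge count at least $L \cdot \binom{k}{2}/\binom{k'}{2} \geq L/(2\alpha^2)$ for $k \geq 2$, so by averaging some $k$-subset of $S$ attains this bound deterministically.

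It remains to choose $c$ so that $L/(2\alpha^2) > L/n^{1/(\log\log n)^{c_1}}$, i.e.\ $2\alpha^2 = n^{2/(\log\log n)^c}/2 \leq n^{1/(\log\log n)^{c_1}}$, for all sufficiently large $n$. Taking $\log_n$ on both sides, this reduces to $2/(\log\log n)^c \leq 1/(\log\log n)^{c_1} + o(1)$, which holds for every $c > c_1$ once $\log\log n$ is large enough (specifically, once $(\log\log n)^{c - c_1} \geq 2$). So any fixed $c > c_1$ works for all but finitely many $n$, and the finitely many small cases can be handled by brute force or by further increasing $c$.

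The only mildly delicate point, which I do not expect to be a real obstacle, is that the reduction must be deterministic and polynomial time: however, since we only use the expectation bound to derive the \emph{existence} of a good $k$-subset inside $S$ (not to actually find it), the reduction requires no randomness at all. Everything is in the analysis: we feed $\mathsf{A}$ the original $(G,k,L)$, trust its verdict, and use the averaging argument purely to show that a wrong verdict would contradict Theorem~\ref{thm:manu}. This yields Corollary~\ref{thm:manu2}.
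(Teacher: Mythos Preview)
Your proposal is correct and follows essentially the same route as the paper: both derive the corollary from Theorem~\ref{thm:manu} via the random-$k$-subset averaging argument, obtaining that a $k'$-set with more than $L$ edges forces a $k$-set with more than $L/(2\alpha^2)$ edges, and then absorb the factor $2\alpha^2$ into the exponent by choosing a larger constant $c$. The paper states this as ``an $\alpha$-approximation to the vertex version yields a $1/(2\alpha^2)$-approximation to \dks'' whereas you phrase it as a contrapositive gap-preserving reduction, but the content is identical.
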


Now we give a reduction from the vertex version of \dks to the \pc problem on a single machine.

\begin{proof}[Proof of Theorem \ref{thm:PCLB}]

To keep the notation simple, we allow jobs to have weights in our proof. However, it easy to see that in the precedence constrained scheduling model, weighted and unweighted cases are equivalent. To see this, suppose there is a job $j$ of weight $w_j$. Then, one can create $w_j$ dummy jobs of zero processing length that all depend on job $j$. This ensures that until $j$ finishes, it pays $w_j$ cost towards the flow-time, which is same as job $j$ having a weight of $w_j$.

Consider an instance $I$ of the vertex version of \dks problem. Let $\mathcal{H}(I)$ denote the instance of \pc produced by our reduction. For every vertex $v \in G$, we create a job $j_v$ in $\mathcal{H}(I)$. The processing length of job $j_v$ is zero, and weight is 1. For every edge $e \in G$, we have a job $j_e \in \mathcal{H}(I)$. The processing length of job $j_e$ is 1 and its weight is zero. All these jobs have release time of 0.  If $v \in e$ in the graph, we have $j_e \prec j_v$.
Let $\delta = 1/n^2$, and $T > 0$ be an arbitrarily large number.  In the interval $(m-L, T]$, at every time step $m-L + k\delta$, for $k = 1, 2,...$, we release a job of size $\delta$ and weight $1$. Now consider the following two cases.

\medskip

\textbf{Case 1:} Suppose there exists a set of $k$ vertices $S$ of $G$ such that $G[S]$ contains at least $L$ edges. Then, we claim that there is a solution of cost at most $(n-k)(m-L) + k(T+L) + T - (m-L)$.  We can schedule the edge jobs correspondent to the edges not in $G[S]$ in interval $[0, m-L]$.  Then the $n-k$ vertex jobs $\set{j_v: v \notin S}$ can be scheduled at time $m-L$. All the edge jobs can be completed by time $T + L$; thus, the $k$  vertex jobs $\set{j_v: v \in S}$ can be scheduled at time $T + L$. The total weighted completion time for $\delta$-sized jobs is $T - (m - L)$.


\medskip 
\textbf{Case 2:} Every subset of size $k \cdot n^{1/2(\log \log n)^c}/2$ vertices of $G$ contains at most $L$ induced edges. This implies that in any algorithm at least $k \cdot n^{1/2(\log \log n)^c}/2$ vertex jobs are alive at time $t = m-L$. This further implies that at all times in the interval $(m-L, T]$ at least $k \cdot n^{1/2(\log \log n)^c}/2$ jobs of weight 1 are alive in any algorithm's schedule. This is because, if any job corresponding to an edge is scheduled, it will take at least one time  unit to complete that job. During that period at least $n^2$ jobs arrive which can not be scheduled. Note that $k \cdot n^{1/2(\log \log n)^c}/2 < n^2$. Therefore, cost incurred by any algorithm in this case is at least $(k \cdot n^{1/2(\log \log n)^c}/2) \cdot (T-(m-L))$.

\medskip
Suppose $T$ is some large polynomial in $n$. For our purposes $T = n^4$ suffices. In this case, the cost of schedule in case 1 is dominated by term $Tk$ and the cost the schedule in case 2 is dominated by $(k \cdot n^{1/2(\log \log n)^c}/2 \cdot) \cdot T$. From Corollary (\ref{thm:manu2}), no polynomial time algorithm can distinguish between these two cases. Clearly, our reduction $\mathcal{H}$ is of size polynomial in $n$ and takes polynomial time. This completes the proof.
\end{proof}

We now turn our attention to the multiple machines setting.  Here we give a simple reduction that shows  to obtain  a non-trivial approximation factor speed augmentation has to be at least the approximation factor of the underlying makespan minimization problem.  Our result holds not just for the identical machines case but also for any machine environment such as related machines setting, unrelated machines setting etc.

\begin{theorem}
\label{cl:speedup}
Suppose the precedence constrained scheduling to minimize makespan problem cannot be approximated better than $\gamma > 1$ for a machine environment $\mathcal{M}$. Then, unless given a speed augmentation of at least $\gamma$, the problem of minimizing the total unweighted flow-time cannot be approximated better than $\Omega(n^{1-c})$ for any constant $ c \in [0,1)$.
\end{theorem}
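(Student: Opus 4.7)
The approach is a gap-producing reduction from the makespan minimization problem on environment $\mathcal M$ to the unweighted flow-time problem on $\mathcal M$. The plan is to show that any polynomial-time $s$-speed algorithm with $s < \gamma$ achieving approximation ratio $o(n^{1-c})$ would be able to decide the makespan gap problem, contradicting the assumed hardness.

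First I will take a makespan instance $I$ on $n$ jobs whose optimum lies in the hard gap $[T^*, \gamma T^*]$ and construct a flow-time instance $I'$ on the same machines as follows: copy every job of $I$ at release time $0$ with the original precedences; add a ``sink'' job $\sigma$ of tiny processing length, released at $0$, with $j \prec \sigma$ for every $j \in I$; and add $N := \lceil n^{1/c}\rceil$ ``follower'' jobs $f_1, \dots, f_N$ of tiny processing length, each released at time $T^*$ with $\sigma \prec f_i$ as the only precedence. For any constant $c > 0$ this reduction is polynomial, the total job count is $n' = \Theta(N)$, and the critical ratio is $N/n = \Theta(n'^{\,1-c})$.

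Next I will establish the two estimates that drive the reduction. In the YES case a unit-speed schedule can finish $I$ in $[0,T^*]$, then $\sigma$, then every follower in negligible total time, for total flow-time $O(nT^*)$. In the NO case I will use the time-dilation interpretation of speed augmentation (which holds uniformly in every standard machine environment such as identical, related, or unrelated machines): the $s$-speed makespan of $I$ is at least $\gamma T^*/s > T^*$. Because the predecessor set of $\sigma$ is exactly $I$, any $s$-speed schedule must complete $\sigma$ no earlier than $\gamma T^*/s$, so each of the $N$ followers, released at $T^*$, is alive for at least $T^*(\gamma - s)/s = \Omega(T^*)$ time, contributing $\Omega(NT^*)$ to the total flow-time.

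Finally I will combine the two bounds: if some polynomial-time $s$-speed algorithm $A$ achieved approximation ratio $\alpha = o(N/n)$, then on instances $I'$ built from a YES makespan input $A$ would report cost at most $\alpha \cdot O(nT^*) = o(NT^*)$, while on instances $I'$ built from a NO makespan input $A$'s cost is forced to be $\Omega(NT^*)$; thresholding $A$'s output between these two scales would decide the makespan gap problem in polynomial time, contradicting the hardness assumed on $\mathcal M$. This yields the claimed $\alpha = \Omega(N/n) = \Omega(n'^{\,1-c})$. The main technical point I expect to be the obstacle is the NO-case lower bound, which must hold for \emph{every} $s$-speed schedule, including ones that preempt or reorder jobs of $I$; but this reduces cleanly to the fact that $\sigma$'s start time equals the completion time of the last job of $I$ in the schedule, which is at least the $s$-speed makespan of $I$, itself at least $\gamma T^*/s$ by the uniform scaling of makespan under speed augmentation.
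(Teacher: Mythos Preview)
Your proposal is correct and follows the same overall strategy as the paper: a gap reduction from the makespan problem that appends many cheap jobs which can only start after the entire makespan instance completes, so that in the NO case any $s$-speed schedule with $s<\gamma$ incurs $\Omega(NT^*)$ flow-time on those jobs while in the YES case the optimum is $O(nT^*)$.

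The gadgets differ slightly. The paper releases a \emph{chain} of $\delta$-sized jobs, one every $\delta$ time units over the interval $[1,\gamma^\epsilon)$, each depending on all earlier jobs; the lower bound then tracks how many chain jobs are simultaneously alive near the end of the interval. You instead introduce a single tiny sink $\sigma$ succeeding all makespan jobs and release $N=n^{1/c}$ parallel followers at time $T^*$, all depending only on $\sigma$. Your construction is cleaner: the NO-case lower bound becomes the one-line observation that $\sigma$ cannot complete before $\gamma T^*/s$, so every follower waits at least $(\gamma/s-1)T^*$, and you avoid the accounting over $(T-1/2,T]$ that the paper does. The paper's chain structure does not buy anything additional for this theorem. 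Both versions rely on the same two ingredients you correctly isolate: knowledge of the gap threshold $T^*$ (standard in gap hardness), and the time-dilation fact that the speed-$s$ makespan of $I$ is exactly the unit-speed makespan divided by $s$, which holds in any of the listed machine environments.
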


\begin{proof}
Consider an instance $\mathcal{I}$ of precedence constrained scheduling to minimize makespan problem. Let $m$ be the number of machines and $n'$ be the number of jobs in $\mathcal{I}$ . Let $J_1$ denote the set of jobs in $\mathcal{I}$.
Without loss of generality, assume that the optimal makespan is 1; this can be achieved by scaling the processing lengths of jobs by the optimal makespan value. With this assumption, we know that it is NP-hard to distinguish if optimal makespan is 1 or $\gamma$. We create an instance $\mathcal{I}'$ for the flow-time problem as follows: In  $\mathcal{I}'$, the machines are same as that in $\mathcal{I}$. Now we describe the job set: At time $t =0$, we release the jobs in the makespan instance $J_1$ with same precedence constraints. 

Let $T = \gamma^{\epsilon}$ and let $\delta > 0$ be a very tiny constant. For each time instant $t  =  (1 + k\delta) \in [1, T)$  a single job $j$ arrives.  The processing length of $j$ is such that it takes exactly of $\delta$ time units to complete the job on any machine. Further,  $j$ can be processed only if all the jobs preceding it are completed; i.e., $j' \prec j$ for all $j'$ with $r_{j'} < r_j$.  This completes the description of our instance $\mathcal{I}'$. Let $J_2$ denote the set of jobs that arrive in the interval $(1, T]$.

Now, let us calculate the total flow-time of jobs in the optimal solution.  The total flow-time of jobs in $J_1$ is at most $n'$ since we assumed that optimal makespan is 1. The total flow-time of jobs that are released in the interval $[1, T)$ is exactly equal to $T$ as there is exactly one job alive at any time $t \in [1, T)$. Thus, the total flow-time of jobs in the optimal solution is at most $n' + T$.

Let $\mathcal{A}$ be any polynomial time algorithm that is given a speed of $\gamma^{1-\epsilon}$ for any $\epsilon > 0$. Let us calculate the total flow-time of jobs in the schedule produced by $\mathcal{A}$. Towards that let us focus on the number of jobs that are alive at time $t = (\gamma^\epsilon - 1)$. We claim that exactly $(\gamma^\epsilon - 1)/\delta$ jobs from the set $J_2$ are alive at that time. This is true since  $\mathcal{A}$ cannot complete the jobs in the makepsan instance $J_1$ earlier than $\gamma^\epsilon$, and the jobs that are released in the interval $[1, \gamma^\epsilon]$ cannot begin their execution until all the jobs in $J_1$ are completed. The total number of jobs that arrive in $[1, \gamma^\epsilon]$ is equal to $(\gamma^\epsilon - 1)/\delta$.  Due to precedence constraints, $\mathcal{A}$ has to process jobs in $J_2$ only one at time.  This implies that at all times $t \in (T-1/2, T]$ at least $(\gamma^\epsilon - 1)/2\delta$ jobs are alive. Thus, the total flow-time of $\mathcal{A}$ is at least $(\gamma^\epsilon - 1)^2/2\delta$.

Suppose $1/\delta = n'^{1/c}$, where $n'$ is the number of jobs in the makespan instance. Since $c$ is a constant, this ensures the our reduction is of size  $O(n'^{1/c})$.  The ratio of the cost of $\mathcal{A}$ to that of the optimal solution is at least $((\gamma^\epsilon - 1)^2/2\delta + n')/ (n' +  \gamma^{\epsilon}) = \Omega(n'^{(1/c - 1}))$. Now, let us express this in terms of the number of jobs in the flow-time instance. The total number of jobs in the flow-time instance $n$ is equal to $n' + T/\delta \approx T/\delta = T \cdot n'^{(1/c)}$. Thus, writing our lowerbound in terms $n$ we get a  gap of $n^{1-c}$ on the approximation factor, which completes the proof.
\end{proof}

\begin{proof}[Proof of Theorem (\ref{thm:PCUB})]
Above theorem immediately gives the lowerbound on the speed augmentation factor claimed in the Theorem (\ref{thm:PCUB}).
\end{proof}

\section{Conclusion and Open Problems}
In this paper, we considered two classical scheduling models and gave nearly optimal approximation algorithms. 
However, plenty of interesting open problems remain. 
In the open-shop scheduling model, the most interesting open question is if we can generalize our result to the non-concurrent case. 
Another interesting open problem is to obtain a non-trivial  approximation algorithm for the co-flow scheduling problem.
For the precedence constrained scheduling model, a more open-ended question is if we can show results without any speed augmentation for some interesting special cases.

\bibliographystyle{plainnat}
\bibliography{mmf}

\end{document}